\theoremstyle{plain}
\newtheorem*{theorem*}{Theorem}
\newtheorem{prop}{Proposition}
\newtheorem*{lemma*}{Lemma}
\newtheorem{theorem}{Theorem}
\newtheorem{corollary}{Corollary}
\newtheorem{lemma}{Lemma}
\newcommand{\norinf}[1]{\left\|#1\right\|_{\infty}}
\newcommand{\ZZ}{\mathbb{Z}}
\newcommand{\ZZgeq}{\mathbb{Z}_{\geq 0}}
\newcommand{\modsubadd}{\textsc{Mod-SubAdditivity Testing}}
\newcommand{\minconv}{\textsc{MinConv}}
\newcommand{\subsetsum}{\textsc{Unbounded SubsetSum}}
\newcommand{\frobenius}{\textsc{Frobenius}}
\newcommand{\allvalues}{\textsc{AllTargets}}
\newcommand{\subadd}{\textsc{SubAdditivity Testing}}
\title{On the Fine-Grained Complexity of the Unbounded SubsetSum and the Frobenius Problem}
\author{Kim-Manuel Klein\\
University of Kiel\\
\url{kmk@informatik.uni-kiel.de}}
\begin{document}

\date{}
\maketitle

\begin{abstract}
    Consider positive integral solutions $x \in \ZZgeq^{n+1}$ to the equation $a_0 x_0 + \ldots + a_n x_n = t$. In the so called unbounded subset sum problem, the objective is to decide whether such a solution exists, whereas in the Frobenius problem, the objective is to compute the largest $t$ such that there is no such solution.
    
    In this paper we study the algorithmic complexity of the unbounded subset sum, the Frobenius problem and a generalization of the problems. More precisely, we study pseudo-polynomial time algorithms with a running time that depends on the smallest number $a_0$ or respectively the largest number $a_n$. For the parameter $a_0$, we show that all considered problems are subquadratically equivalent to $(min,+)$-convolution, a fundamental algorithmic problem from the area of fine-grained complexity. By this equivalence, we obtain hardness results for the considered problems (based on the assumption that an algorithm with a subquadratic running time for $(min,+)$-convolution does not exist) as well as algorithms with improved running time. The proof for the equivalence makes use of structural properties of solutions, a technique that was developed in the area of integer programming. 
    
    In case of the complexity of the problems parameterized by $a_n$, we present improved algorithms. For example we give a quasi linear time algorithm for the Frobenius problem as well as a hardness result based on the strong exponential time hypothesis.
\end{abstract}

\section{Introduction}
Consider the integer program
\begin{align} \label{IP1}
    a_0 x_0 + \ldots + a_n x_n &= t\\
    x \in \ZZgeq^{n+1}&. \notag
\end{align}
for a given set of numbers $a_0, \ldots , a_n,t \in \ZZgeq$.
Having only a single constraint, (\ref{IP1}) is the most basic integer program in standard form. In this paper we consider fundamental algorithmic problems regarding (\ref{IP1}).
We call the numbers $a_0, \ldots , a_n$ the \emph{item sizes} and define the set of item sizes $S$ by $S = \{a_0, \ldots , a_n \}$. Throughout the paper we assume that the items are sorted by $a_0 < a_1 < \ldots < a_n$ and that $gcd(a_0, \ldots , a_n) = 1$ (this property is not really necessary as one can always divide by the gcd). We call $t$ the \emph{target value} and we say that $t$ is a feasible target value for given $a_0, \ldots, a_n$ if (\ref{IP1}) is feasible.

In the \subsetsum~problem the objective is to decide if (\ref{IP1}) is feasible, i.e. decide for given item sizes $S$ and target $t \in \ZZgeq$ if $t$ is feasible.
Formally it is defined by
\begin{tcolorbox}
    \subsetsum
    
    \textbf{Input:} Items $a_0, \ldots , a_n \in \ZZgeq$ and $t \in \ZZgeq$.\\
    \textbf{Objective:} Decide whether (1) is feasible.
\end{tcolorbox}
By a classical dynamic program that is being taught in undergraduate, the problem can be solved in time $O(nt)$ \cite{bellman1966dynamic_programming}.

In a related problem, the objective is to compute the largest $t$ such that (\ref{IP1}) is infeasible. This problem is intensively studied in the literature and is referred to by different names like the coin problem, the postage-stamp problem or the Frobenius problem (it was originally defined by Frobenius).
\begin{tcolorbox}
    \frobenius
    
    \textbf{Input:} Items $a_0, \ldots , a_n \in \ZZgeq$.\\
    \textbf{Objective:} Compute the largest $t \in \ZZgeq$ such that (\ref{IP1}) does not have a solution.
\end{tcolorbox}
The solution to the problem is called the \emph{Frobenius number} and we denote it by $F(a_0, \ldots , a_n)$. In the case that the item sizes $a_0, \ldots, a_n$ are clear from the context, we will only write $F$. For this problem, the property that $gcd(a_0, \ldots , a_n) = 1$ is essential, as $F(a_0, \ldots , a_n)$ would not be finite otherwise. It is easy to see that $F$ is finite if $gcd(a_0, \ldots , a_n) = 1$ and moreover $F \leq a_{n}^2$.

A natural generalization of both problems is a problem that computes the so called $\emph{residue table}$. The residue table was first considered by Brauer and Shockley~\cite{brauer1962residue_table} already in 1962 in order to solve \frobenius. In this problem, essentially, the feasibility for all target values $t$ is solved. We call the problem \allvalues.
\begin{tcolorbox}
    \allvalues
    
    \textbf{Input:} Items $a_0, \ldots , a_n \in \ZZgeq$.\\
    \textbf{Objective:} Compute $a[0], \ldots , a[n-1]$, such that  \begin{align*}
        a[i] = \min \{t \in \ZZgeq \mid t \equiv i \bmod a_0 \text{ and $t$ is a feasible solution of (\ref{IP1})}\}.
    \end{align*}
\end{tcolorbox}
Having a solution of \allvalues~at hand, one can decide in constant time if any given target $t$ is feasible (assuming that any element $a[i]$ in the sequence $a$ can be accessed in constant time). This is because
\begin{align*}
    t \text{ is feasible} \quad  \Leftrightarrow \quad t \geq a[t \bmod a_0]. 
\end{align*}
Note that in the case that $t > a[t \bmod a_0]$ a solution for target $t$ can be obtained by adding multiples of $a_0$ to the solution for target $a[t \bmod a_0]$ (which by definition exists). By the same argumentation, the Frobenius number can be determined from the residue table $a$ in $O(a_0)$ time. The Frobenius number $F$ is
\begin{align*}
    F = \max_i \{ a[i] \} - a_0. 
\end{align*}
Many algorithms in the literature that solve \subsetsum~or \frobenius~actually solve the more general problem of \allvalues~and then extract the respective solution for \frobenius~or \subsetsum~from the table. In this context one might ask if the \allvalues~problem is actually computationally equivalent to \subsetsum~and \frobenius~or if it is harder to compute the more general problem of determining the complete residue table. We have a partial answer on this question in this paper.

In general, \subsetsum~and \frobenius~are classical (weakly) NP-complete problems \cite{lueker1975_unbounded_subsetsum_NP-hardness,ramirez1996complexity_frobenius}, that is, if we assume that the given numbers $a_0,\ldots , a_n,t$ are encoded in binary. In this paper we study pseudo-polynomial time algorithms with a running time that depends on the smallest item size $a_0$ or the largest item size $a_n$.

\subsection{Related Results}
There are two different communities involved in studying the \subsetsum~and the \frobenius~problem.
On the one hand, there is the community that considers variants of subset sum problems.
On the other hand there is the community which studies the \frobenius~problem from various angles, which includes bounds for the Frobenius number, algorithmic complexity of the problem as well as practical fast algorithms.

\subsubsection*{Subset Sum}
One of the most fundamental problems in computed science is the (bounded) subset sum problem where the objective is to find a subset of a given (multi-)set of $n$ items such that there total sum equals $t$. This is equivalent to an integer program as in (\ref{IP1}), where additional upper bounds on the variables are given. \subsetsum~is in this sense a special case of the (bounded) subset sum problem, where the multiplicities of each item size are sufficiently large, i.e. $\geq t$.
Using a classical dynamic program by Bellman from 1957, the (bounded) subset sum problem can be solved in time $O(nt)$. This algorithm was improved by Koiliaris and Xu~\cite{koiliaris2019faster_subsetsum} who gave an algorithm with a running time of $\Tilde{O}(\sqrt{n}t)$ (omitting logarithmic factors) and then further by a breakthrough result of Bringmann~\cite{bringmann2017subset_sum_near_linear} who gave an algorithm with a running time of $\Tilde{O}(n + t)$. In that work, Bringmann also developed an algorithm for \subsetsum~with a running time of $O(t \log t)$.
Jansen and Rohwedder~\cite{Jansen_rohwedder_IP_convolution} showed later that there is an algorithm with a running time $O(a_n \log^2 a_n)$ for \subsetsum~for the stronger parameter $a_n$. Assuming the strong exponential time hypothesis (SETH), it was shown that the existing algorithms are actually optimal as Abboud et al.~\cite{abboud_bringmann_hardness_subset_sum_bringmann2019} proved that there is no algorithm for subset sum or \subsetsum~with a running time of $O(t^{1-\epsilon})$ for any $\epsilon>0$. 

The complexity of \subsetsum~regarding the parameter $a_0$ was open so far. In the hard instance that was used in \cite{abboud_bringmann_hardness_subset_sum_bringmann2019} all items $a_0, \ldots , a_n$ and $t$ differ only by a factor of $2$ and therefore only an algorithm with a running time of $O(a_{0}^{1-\epsilon})$ can be excluded.

\subsubsection*{The Frobenius Problem}
The \frobenius~problem is an intensively studied problem in the literature. Since we can not cover all its aspects here, we refer to the monograph of Ramirez-Alfonsin~\cite{alfonsin2005frobenius_book} for a general overview. The book covers over 400 sources that are concerned with the problem.

In the case that $n=2$, the Frobenius number $F$ can be easily determined by the formula $F = a_0 a_1 - a_0 - a_1$. However for $n>2$ there is no formula but several upper and lower bounds are known for $F$ (see for example \cite{aliev2007frobenius_lower_bound,fukshansky2007frobenius_upper_bound}). A bound that we make use of in this paper is due to Erdös and Graham~\cite{erdos1972frobenius_bound} which states that
\begin{align*}
    F(a_0, \ldots , a_n) \leq \frac{2 a_{n-1} a_n}{n+1} - a_n =  O\big(\frac{a_{n}^2}{n}\big) .
\end{align*}
As mentioned, the \frobenius~problem is NP-complete and hence we can not expect a polynomial time algorithm. However, many pseudo-polynomial time algorithms have been studied having a running time which depends on the input parameters.
The most relevant result for our paper here is the result of Böcker and Lipt{\'a}k~\cite{bocker2007_frobenius_algorithm}. They developed an algorithm solving \allvalues~and therefore also \frobenius~and \subsetsum~with a running time of $O(n a_0)$. Based on the Round Robin procedure, they presemt a very elegant dynamic program computing the residue table. By this, Böcker and Lipt{\'a}k improved upon several known existing pseudo-polynomial time algorithms for the \frobenius~problem (see Chapter 1 in \cite{alfonsin2005frobenius_book}).
The algorithm of Nijenhuis~\cite{nijenhuis1979frobenius_dijkstra} for example has a running time of $O(a_0(n + \log a_0))$. Nijenhuis used Dijkstras algorithm in the weighted Cayley graph of the group $(\ZZ_{a_0},+)$ to compute the residue table. The performance of several pseudo-polynomial time algorithms for the problem has been tested in practice~\cite{beihoffer2005practical_frobenius_nijenhius}.

The mentioned pseudo-polynomial time algorithms are efficient in the case that the input numbers are small. There are also algorithms for \frobenius~in the case that $n$ is small. For example, for three coins, Greenberg~\cite{greenberg1988frobenius_three_items} developed an algorithm with a running time of $O(\log a_0)$. Kannan~\cite{kannan1992frobenius_polynomiality_constant_n} showed in the end that there is a polynomial time algorithm for any constant $n$. Other practically efficient algorithms have been developed that are efficient if $n$ is small (see for example \cite{einstein2007frobenius, roune2008solving_frobenius_small_n}).

\subsection{Our Results:}
In this paper we prove several algorithmic results as well as hardness results based the above mentioned assumption for $(min,+)$-convolution (which we denote by \minconv) and the SETH.
Concerning parameterization by $a_0$, we prove the following theorem, which shows that all our considered problems are subquadratically equivalent.
\begin{theorem}\label{thm:equivalence}
    The following statements are equivalent:
    \begin{enumerate}
        \item \minconv~can be solved in time $O(n^{2-\epsilon})$.
        \item \subsetsum~can be solved in time $O(a_{0}^{2-\epsilon})$.
        \item The \frobenius~problem can be solved in time $O(a_{0}^{2-\epsilon})$.
        \item \allvalues~can be solved in time $O(a_{0}^{2-\epsilon})$.
    \end{enumerate}
\end{theorem}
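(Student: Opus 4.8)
The plan is to prove the cycle $(1)\Rightarrow(4)\Rightarrow(2)\Rightarrow(1)$ together with $(4)\Rightarrow(3)\Rightarrow(1)$. Two of these links are immediate from the identities recalled in the introduction: from the residue table $a[0],\dots,a[n-1]$ one reads off the \frobenius{} number as $\max_i a[i]-a_0$ and decides any \subsetsum{} target $t$ by the test $t\ge a[t\bmod a_0]$, both in $O(a_0)$ additional time, so $(4)\Rightarrow(3)$ and $(4)\Rightarrow(2)$. It therefore remains to establish the algorithmic direction $(1)\Rightarrow(4)$, a many--one reduction from \minconv{} to \allvalues{} witnessing $(4)\Rightarrow(1)$, and the two hardness reductions $(2)\Rightarrow(1)$ and $(3)\Rightarrow(1)$, which I would obtain by showing that a single call to \subsetsum{} (resp.\ \frobenius{}) on a suitable instance with smallest item $\Theta(n)$ decides a \minconv{}-verification problem known to be subquadratically equivalent to \minconv{}.

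For $(1)\Rightarrow(4)$ I would first normalise so that $n<a_0$: if $a_j<a_k$ with $a_j\equiv a_k\bmod a_0$, writing $a_k=a_j+ca_0$ shows that replacing one copy of $a_k$ by one copy of $a_j$ and $c$ copies of $a_0$ changes no feasible target, so we may keep only the cheapest item in each residue class. Let $B$ be the base table with $B[0]=0$ and $B[i]=\min\{a_j:a_j\equiv i\bmod a_0\}$ (and $\infty$ if the class is empty) for $i\neq 0$. Residue tables of sub-instances compose under the cyclic $(\min,+)$-convolution $\circledast$ over $\ZZ_{a_0}$. The structural ingredient---the exchange argument behind Carath\'eodory-type bounds in integer programming---is that the relevant $\circledast$-closure is reached quickly: if a value $t\equiv i\bmod a_0$ is attained only by solutions using at least $a_0$ items (counted with multiplicity), then two of the partial sums are congruent modulo $a_0$, and deleting the block between them yields a strictly smaller feasible value in class $i$; hence every optimal entry uses fewer than $a_0$ items, so the residue table equals the $2^m$-fold cyclic convolution $B^{\circledast 2^m}$ for $m=\lceil\log_2 a_0\rceil$, obtained from $B$ by $m$ rounds of squaring under $\circledast$ (the entry $B[0]=0$ makes each round count ``at most'' rather than ``exactly'' that many items). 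Each round is $O(1)$ calls to \minconv{} on length-$O(a_0)$ vectors plus $O(a_0)$ bookkeeping, for a total of $O(a_0^{\,2-\epsilon}\log a_0)=O(a_0^{\,2-\epsilon'})$.

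For the reductions I would use a base-$W$ digit gadget with well-separated tags on both the position digits and the value digits. Given \minconv{} inputs $a,b$ of length $n$ with entries in $[0,V]$, I create items $\alpha_i$ encoding position $i$ (plus a position tag) and value $a[i]$ (plus a common value tag $\ge 2V$), items $\beta_j$ encoding position $j$ (plus a different position tag) and value $b[j]$ (plus the same value tag), and I let the designated smallest item be $W$ itself---a number $\Theta(n)$ that only pads the value digits---so that the \minconv{} values sit in the quotients and $a_0=\Theta(n)$. The tags can be arranged so that, in the residue class $r_k$ associated with $i+j=k$, the minimum feasible value is attained only by the intended combination of one $\alpha_i$ and one $\beta_j$ with $i+j=k$ (any combination with more items is heavier because of the per-item value tags, and the ``light'' wrong combinations fall into other residue classes). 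Hence one \allvalues{} call returns, for every $k$, a value of the form $\big(\min_{i+j=k}(a[i]+b[j])+\text{const}\big)W+r_k$, and $O(n)$ postprocessing recovers the whole $(\min,+)$-convolution: this is $(4)\Rightarrow(1)$. Augmenting the gadget by items $\delta_k$ carrying a candidate upper bound $c[k]$ and fixing the target's value digit appropriately, a single \subsetsum{} query then decides whether some $(i,j)$ has $a[i]+b[j]>c[i+j]$---i.e.\ whether $c$ fails to dominate the $(\max,+)$-convolution of $a$ and $b$---which, together with the (by now standard) subquadratic equivalence of \minconv{} with this verification problem, gives $(2)\Rightarrow(1)$.

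Finally, the very same item set realises, in a distinguished residue class $r^\ast$, the value $\big(\text{const}-\max_k(\max_{i+j=k}(a[i]+b[j])-c[k])\big)W+r^\ast$, so a \frobenius{} query---which returns $\max_{\text{class}} a[\cdot]-a_0$---reads off this verification bit, provided $r^\ast$ is forced to be the unique bottleneck class. Engineering that ``bottleneck-forcing'' step, so that the many extra residue classes created by the $\delta_k$'s cannot overtake $r^\ast$, is the piece I expect to be the main obstacle; everything else is either the clean $\circledast$-closure argument of $(1)\Rightarrow(4)$ or routine digit and tag bookkeeping. Once it is in place, $(3)\Rightarrow(1)$ follows just as in the \subsetsum{} case, and the cycle---hence the equivalence of all four statements---is complete.
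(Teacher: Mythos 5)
Your overall architecture coincides with the paper's: reduce \allvalues{} to (cyclic) \minconv{} for $(1)\Rightarrow(4)$, get $(4)\Rightarrow(2),(3)$ for free from the residue table, and close the cycle by reducing a problem subquadratically equivalent to \minconv{} to \subsetsum{} and to \frobenius{} via items of the form ``$a_0\cdot(\text{value})+(\text{position residue})$'' with $a_0=\Theta(n)$. Your $(1)\Rightarrow(4)$ argument is sound: the pigeonhole on partial sums shows every residue class has a minimal representative using fewer than $a_0$ items, so $\lceil\log_2 a_0\rceil$ rounds of cyclic $(\min,+)$-squaring of the base table (with the $0$ entry in class $0$) reach the closure, and the extra $\log a_0$ factor is harmless for subquadratic equivalence. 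The paper proves the sharper bound $\prod_i(x_i+1)\le a_0$ for a lexicographically maximal solution, which lets it seed the table with the doubled items $2^j a_i$ and get away with only $\lceil\log\log a_0\rceil$ convolutions; that refinement matters for the concrete $a_0^2/2^{\Omega(\sqrt{\log a_0})}$ running time but not for Theorem~1 itself.

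The genuine gap is exactly where you located it: the hardness directions, and above all $(3)\Rightarrow(1)$. Your plan adds separate item families for $a$, $b$, and the candidate bounds $c[k]$, and then needs to force a single distinguished residue class $r^\ast$ to be the bottleneck of the \frobenius{} instance; you leave that step open, and with three item families it is not clear it can be arranged. The paper sidesteps the problem by reducing from \textsc{SubAdditivity Testing} (the special case $a=b=c$ of your verification problem, still subquadratically equivalent to \minconv{} by Cygan et al.), so no $\delta_k$ items are needed: with $a_0=2n$, $a_i=a_0a[i]+2i$ and $\bar a_i=a_0(M-a[i])-2i-1$, every residue class except $-1\bmod a_0$ is hit by a \emph{single} item of size $<a_0M-1$, so those classes can never be the bottleneck, and the class $-1$ (reachable only by using exactly one $\bar a_p$, by a parity count) has a representative below $a_0M-1$ iff subadditivity fails. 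Two further points your sketch would need to address and the paper handles explicitly: (i) in the ``solution $\Rightarrow$ violation'' direction a feasible solution may use more than two items $a_i$, and one must extract a violating pair $(i,j)$ from a longer sum by an inductive splitting argument; (ii) in the ``violation $\Rightarrow$ solution'' direction the target must be hit exactly, which the paper arranges by padding with $r$ copies of $a_0$. Until the bottleneck-forcing step (or the switch to the $a=b=c$ formulation) is carried out, $(3)\Rightarrow(1)$, and hence the full equivalence, is not established.
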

More precisely, we show in Theorem~\ref{thm:subsetsum_by_conv} that \allvalues~can be solved by using only at most $\lceil \log \log a_0 \rceil$ times $(min,+)$-convolution as a subroutine. Using the algorithm of Williams~\cite{williams2018faster_apsp}, we obtain a faster algorithm for the considered problems with a running time of $\frac{a_{0}^2}{2^{\Omega( \log a_0)^{1/2}}}$ beating the best known algorithm of Böcker and Lipt{\'a}k~\cite{bocker2007_frobenius_algorithm} in the case that $n \in \Omega(\frac{a_0}{\log^c a_0})$ for any constant $c$. 
The theorem relies on a structural result for solutions of (\ref{IP1}) that we show in Lemma~\ref{lem:structure_main} which is of interest by itself. This type of structural result was first used to bound the number of non-zero components of feasibility IPs.



Concerning parameterization by the largest item size $a_n$, we present the following results:
\begin{itemize}
    \item Algorithm~\ref{alg:subsetsum} solving \subsetsum~with a running time of $O(a_n \log a_n \cdot \log (\frac{F}{a_n}))$ improving upon the algorithm by Jansen and Rohwedder~\cite{Jansen_rohwedder_IP_convolution} in the case that the Frobenius number $F$ is small, i.e. $F << a_{n}^2$. For example, for average case instances the expected Frobenius number is rather small~(see \cite{aliev2011expected_frobenius}).
    \item Algorithm~\ref{alg:frobenius} solving \frobenius~with a running time of $O(a_n \log a_n \cdot \log (\frac{F}{a_n}))$. To our best knowledge this is the first algorithm solving the problem in quasi linear time by an input parameter. It beats the algorithm obtained by Theorem~\ref{thm:equivalence} in the case that $a_n << a_{0}^2$.
    \item Algorithm~\ref{alg:allvalues_an} solving \allvalues~with a running time of $O(a_{n}^{3/2} \log^{1/2} a_n)$. To our best knowledge the paramterized complexity of the problem has not been considered yet with respect to the parameter $a_n$.
    \item Based on the existing hardness result by Abbout et al.~\cite{abboud_bringmann_hardness_subset_sum_bringmann2019}, Theorem~\ref{thm:hardness_frobenius_an} proves hardness for \frobenius~for the parameter $a_n$ and therefore shows that Algorithm~\ref{alg:frobenius} is nearly optimal (assuming SETH).
\end{itemize}

\subsection{Hardness Assumptions}
A fundamental problem in the area of fine grained complexity is $(min,+)$-convolution which we denote by \minconv.
\begin{tcolorbox}
    \minconv
    
    \textbf{Input:} Sequences $(a[i])_{i=0}^{n-1}$ and $(b[i])_{i=0}^{n-1}$.\\
    \textbf{Objective:} Compute a sequence $(c[i])_{i=0}^{n-1}$, such that \begin{align*}
        c[k] = \min_{i+j=k} (a[i]+ b[j]).
    \end{align*}
\end{tcolorbox}
It is assumed that there does not exist a truly subquadratic algorithm for the problem.

\textbf{Assumption:} \minconv~can not be solved in time $O(n^{2-\epsilon})$ for any $\epsilon>0$.

This assumption is the base of several lower bounds in fine-grained complexity (\cite{backurs2017minconv_usage,cygan2019_minconv,kunnemann2017minconv_usage}).
Cygan et al.\cite{cygan2019_minconv} showed equivalence of \minconv~to several problems including the unbounded knapsack problem parameterized by $a_n$ - a problem related to \subsetsum.

One of the most basic hardness assumption that we make use of is based on the $k$-SAT problem, where the objective is to decide the feasibility of a boolean formula in conjunctive normal form having at most $k$ variables per clause. The strong exponential time hypothesis (SETH) was stated by Impagliazzo and Paturi~\cite{impagliazzo2001SETH} and is has been used since for dozens of problems in Parameterized Complexity and for problems within P to show lower bounds.

\textbf{Assumption (SETH):} There does not exist an algorithm for $k$-SAT with a running time of $O(2^{(1-\epsilon)n})$ for any $\epsilon > 0$.

\section{Complexity of Parameterization by $a_0$}
In this section we prove Theorem~\ref{thm:equivalence} regarding the equivalence of \minconv~to \allvalues, \subsetsum~and the \frobenius~problem when parameterized by the smallest item size $a_0$.
\begin{center}
    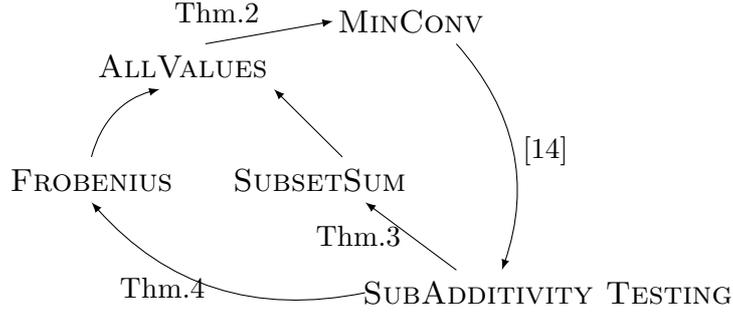
\begin{figure}
    \begin{tikzpicture}[scale=0.6]
\usetikzlibrary{positioning,arrows}
\tikzset{
  state/.style={circle,draw,minimum size=6ex},
  arrow/.style={-latex, bend angle=45}}

\node[] at (5,0) {\large \textsc{MinConv}};
\draw[arrow] (0.5,-0.5) -- node[anchor=south east] {Thm.\ref{thm:subsetsum_by_conv}} (3.3,0);
\node[] at (0,-1) {\large \textsc{AllValues}};
\node[] at (-2,-3.5) {\large \textsc{Frobenius}};
\draw[arrow, bend right] (-2,-3) to [out=30,in=150] node[anchor=east] {} (-0.5,-1.5);
\draw[arrow, bend right] (4,-6) to [out=30,in=150] node[anchor=east] {Thm.\ref{thm:hardness_frobenius}} (-2,-4);
\node[] at (3,-3.5) {\large \textsc{SubsetSum}};
\draw[arrow] (3.5,-3) to node[anchor=east] {} (2,-1.5);
\draw[arrow] (6,-5.5) to node[anchor=east] {Thm.\ref{thm:hardness_subsetsum}} (4,-4);
\node[] at (8,-6) {\large \textsc{SubAdditivity Testing}};
\draw[arrow, bend right] (6,-0.5) to [out=30,in=150] node[anchor=west] {\cite{cygan2019_minconv}} (7,-5.5);
\end{tikzpicture}
    \centering
    \caption{Overview of the reductions. An arrow from $A$ to $B$ denotes a reduction from problem $A$ to problem $B$.}
    \label{fig:overview}
\end{figure}

\end{center}

\subsection{Structural Properties of Solutions}
Before we show Theorem~\ref{thm:equivalence}, we prove a structural result regarding the existence of solutions of~(\ref{IP1}) with a specific shape. The shown structural property will then be of use in Theorem~\ref{thm:subsetsum_by_conv}.

The proof of the structural result relies on a technique that was developed in the area of integer programming showing the existence of solutions of bounded support~\cite{eisenbrand2006_support}. The support $supp(x)$ of a solution $x$ is the set of non-zero components.
\begin{lemma}\label{lem:structure_main}
    Assuming that (\ref{IP1}) is feasible, then there exists a solution $x$ of (\ref{IP1}) with \begin{align*}
        \prod_{i=1}^n (x_i+1) \leq a_0.
    \end{align*}
\end{lemma}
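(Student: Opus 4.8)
The plan is to use the classical ``pigeonhole plus exchange'' scheme from the support-bound literature~\cite{eisenbrand2006_support}. First I would fix a feasible solution $x$ that is minimal in a two-level sense: among all feasible $x$, minimize $\sigma(x) := \sum_{i=1}^n a_i x_i$ (equivalently, use $a_0$ as often as possible), and among all feasible solutions attaining this minimal value of $\sigma$, minimize $\prod_{i=1}^n (x_i+1)$. Such a minimizer exists since $\sigma(x) \in \{0,1,\dots,t\}$ and, once $\sigma(x)$ is fixed, only finitely many feasible $x$ remain. The claim to establish is that this particular $x$ already satisfies $\prod_{i=1}^n(x_i+1) \le a_0$.

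Assume for contradiction that $\prod_{i=1}^n (x_i+1) > a_0$. Consider the box $\mathcal Y := \{\, y \in \ZZ^n : 0 \le y_i \le x_i \text{ for } i = 1,\dots,n \,\}$, which has exactly $\prod_{i=1}^n (x_i+1) > a_0$ elements. The map sending $y \in \mathcal Y$ to the residue $\big(\sum_{i=1}^n a_i y_i\big) \bmod a_0$ then cannot be injective, so there are distinct $y, y' \in \mathcal Y$ with $\sum_{i=1}^n a_i y_i \equiv \sum_{i=1}^n a_i y_i' \pmod{a_0}$. Relabelling if necessary, assume $\sigma(y) \ge \sigma(y')$ and write $\sigma(y) - \sigma(y') = k\, a_0$ with $k \in \ZZgeq$.

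The key point is that a suitable ``swap'' now produces a feasible solution strictly better than $x$ in the chosen order. Put $\tilde x_i := x_i - y_i + y_i'$ for $i = 1,\dots,n$; this is nonnegative because $0 \le y_i \le x_i$ and $y_i' \ge 0$. Setting $\tilde x_0 := x_0 + k$ keeps $a_0 \tilde x_0 + \sum_{i=1}^n a_i \tilde x_i = a_0 x_0 + \sigma(x) = t$, since $\sigma$ decreased by exactly $k a_0$. If $k \ge 1$, then $\sigma(\tilde x) = \sigma(x) - k a_0 < \sigma(x)$, contradicting minimality of $\sigma(x)$ --- and here one need not control the product at all. If $k = 0$, i.e. $\sigma(y) = \sigma(y')$, then $\tilde x$ together with its mirror $\hat x$ given by $\hat x_i := x_i - y_i' + y_i$ (for $i = 1,\dots,n$) and $\hat x_0 := x_0$ are both feasible solutions with the same value $\sigma(x)$; since $(\tilde x_i + 1)(\hat x_i + 1) = (x_i+1)^2 - (y_i - y_i')^2 \le (x_i+1)^2$ for every $i$, with strict inequality for at least one $i$ (as $y \ne y'$), multiplying over $i$ yields $\min\big(\prod_{i}(\tilde x_i+1),\, \prod_{i}(\hat x_i+1)\big)^2 \le \prod_i(\tilde x_i+1)\prod_i(\hat x_i+1) = \prod_i\big((x_i+1)^2 - (y_i-y_i')^2\big) < \prod_i(x_i+1)^2$, contradicting the secondary minimality of $\prod_i(x_i+1)$.

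The routine parts are checking nonnegativity of the swapped vectors and the elementary product inequality in the case $k = 0$. The one genuine obstacle is that the pigeonhole pair $y, y'$ need not be comparable, so a single swap may increase some coordinates and hence the product; this is exactly why the proof first minimizes $\sigma$ (which kills the case $k \ge 1$ outright, irrespective of the product) and only then minimizes the product, where the averaging identity $(x_i+1)^2 - (y_i - y_i')^2 \le (x_i+1)^2$ disposes of the remaining case $k = 0$ without any appeal to comparability.
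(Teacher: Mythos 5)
Your proof is correct, and its skeleton --- take an extremal feasible solution, apply the pigeonhole principle to the box $\{0,\ldots,x_1\}\times\cdots\times\{0,\ldots,x_n\}$ to obtain two distinct subvectors whose weighted sums agree modulo $a_0$, and exchange one for the other --- is the same as the paper's. The two arguments diverge in the choice of extremal order and in how the tie case $k=0$ is closed. The paper takes $x$ lexicographically maximal; the swap then either increases $x_0$ (when $k>0$) or increases the first coordinate at which the two subvectors differ (when $k=0$), contradicting maximality. You instead minimize $\sigma(x)=\sum_{i\geq 1}a_ix_i$ (equivalently, maximize $x_0$), which handles $k\geq 1$ exactly as the paper does, and then break ties by minimizing the product itself; for $k=0$ you form both swaps $\tilde x$ and $\hat x$ and use the identity $(\tilde x_i+1)(\hat x_i+1)=(x_i+1)^2-(y_i-y_i')^2$ to force at least one of them to have a strictly smaller product. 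This averaging step is a genuinely different way to finish: it acts directly on the quantity being bounded and requires no ordering or comparability of the two pigeonhole witnesses, at the cost of a slightly more elaborate two-stage extremal choice. Both routes are complete; yours trades the paper's one-line lexicographic comparison for a self-contained product inequality.
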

\begin{proof}
    Let $x \in \ZZgeq^{n}$ be a solution of (\ref{IP1}) that is lexicographically maximal. We will show that for this solution $\prod_{i=1}^n (x_i+1) \leq a_0$ holds.
    
    Suppose by contradiction that $\prod_{i=1}^n (x_i+1) > a_0$. By the pigeonhole principle there exist two distinct subvectors $x', x'' \leq x$ with
    \begin{align*}
        a^T x' \equiv a^T x'' \bmod a_0
    \end{align*}
    and hence $a^Tx' = a^T x'' + ka_0$ for some $k \in \ZZ$. Without loss of generality we assume that $k \geq 0$ (otherwise we would exchange $x'$ and $x''$). If $k = 0$ we assume that $x'$ is lexicographically smaller than $x''$.\\
    The solution $x$ can now be modified to obtain a solution $y \in \ZZgeq^{n}$ by 
    \begin{align*}
        y = x + x'' - x' + k e_0.
    \end{align*}
    The solution $y$ is feasible since 
    \begin{align*}
       t = a^T x = a^T x + a^T (x'' - x' + k e_0) = a^T y,
    \end{align*}
    as by definition 
    \begin{align*}
        a^T (x''-x'+ k e_0) = a^T (x''-x') + k a_0 = 0,
    \end{align*}
    holds, where $e_0$ is the $0$-th unit vector. 
    Furthermore $y \geq 0$ holds as $x - x' \geq 0$ and therefore we can conclude that $y$ is a feasible solution of IP~(\ref{IP1}).
    
    If $k>0$ then $y_0 = x_0 + k$ and hence the solution $y$ is lexicographically larger than $x$, which is a contradiction to the assumption.
    If $k = 0$ and hence $x'$ is lexicographically smaller than $x''$ there exists a component $i \in (supp(x') \cup supp(x''))$ such that $x''_{i} > x'_{i}$ and $x'_{j} = x''_{j}$ holds for all $j<i$. However, this implies again that $y$ is lexicographically larger than $x$ which is a contradiction to the assumption. Therefore $\prod_{i=1}^n (x_i+1) \leq a_0$ holds.
    \end{proof}
The following corollary shows the existence of a solution of (\ref{IP1}) with bounded support. It follows directly from the previous lemma. The same statement was previously already shown by Aliev et al.~\cite{aliev2020optimizing_sparsity} using a different proof technique. In their proof they use a geometric volume argument in combination with Minkowski's first theorem from the geometry of numbers. Before that Aliev et al.\cite{aliev2017_spars_diophantine} had also shown a weaker bound of $\log(a_n) +1$.
\begin{corollary}
    There exists a solution $x$ of (\ref{IP1}) with \begin{align*}
        |supp(x)| \leq \log(a_0) +1.
    \end{align*}
\end{corollary}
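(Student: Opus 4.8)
The plan is to simply feed the solution guaranteed by Lemma~\ref{lem:structure_main} into a counting argument. First I would take a solution $x$ of (\ref{IP1}) with $\prod_{i=1}^n (x_i+1) \leq a_0$, which exists by Lemma~\ref{lem:structure_main}. The key observation is that every index $i \in \{1,\dots,n\}$ lying in $supp(x)$ contributes a factor of at least $2$ to this product, since $x_i \geq 1$ forces $x_i + 1 \geq 2$, while indices not in the support contribute a factor of exactly $1$.

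Concretely, writing $T = supp(x) \cap \{1,\dots,n\}$, I would note
\begin{align*}
    2^{|T|} \leq \prod_{i \in T}(x_i+1) = \prod_{i=1}^n (x_i+1) \leq a_0,
\end{align*}
and therefore $|T| \leq \log(a_0)$. The only index of $\operatorname{supp}(x)$ not accounted for by $T$ is the index $0$, so $|supp(x)| \leq |T| + 1 \leq \log(a_0) + 1$, which is exactly the claimed bound.

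There is essentially no obstacle here: the corollary is a one-line consequence of the lemma, the only (trivial) point to be careful about is that the product in Lemma~\ref{lem:structure_main} ranges over $i = 1,\dots,n$ and hence does not see the coordinate $x_0$, which is why the additive ``$+1$'' in the support bound is necessary (and cannot in general be removed, as the trivial solution for $t$ a multiple of $a_0$ shows). One could also remark, for context, that this matches the bound obtained by Aliev et al.~\cite{aliev2020optimizing_sparsity} via Minkowski's first theorem, but that is not needed for the proof.
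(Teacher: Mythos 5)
Your argument is correct and is essentially identical to the paper's own proof: both bound $2^{|supp(x)\setminus\{0\}|}$ by the product $\prod_{i=1}^n (x_i+1) \leq a_0$ from Lemma~\ref{lem:structure_main} and then add $1$ for the coordinate $x_0$. No issues.
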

\begin{proof}
    By Lemma~\ref{lem:structure_main} there exists a solution $x$ of (\ref{IP1}) with 
    \begin{align*}
        2^{|supp(x)  \setminus \{0 \}|} \leq \prod_{i=1}^n (x_i+1) \leq a_0.
    \end{align*}
    We obtain that $|supp(x)| \leq \log(a_0) +1$.
\end{proof}

\subsection{Solving \allvalues~by \minconv}
In order to solve \allvalues~we use a min-convolution operation, where the indices of the sequences are computed $\bmod n$.
Let the \emph{modulo-min-convolution} $a \oplus_{n} b$ of two sequence $(a[i])_{i=0}^{n-1}$ and $(b[i])_{i=0}^{n-1}$ be defined by 
\begin{align*}
    (a \oplus_n b)[k] = \min_{i+j \equiv k \bmod n} a[i] + a[j].
\end{align*}
The modulo-min-convolution can easily be computed using the classical convolution on modified sequences $\bar{a}$ and $\bar{b}$ of length $2n$ defined by 
\begin{align*}
    \bar{a}[i] = a[i \bmod n] \quad \text{ for } i = 0,\ldots , 2n-1\\
    \bar{b}[i] = a[i \bmod n] \quad \text{ for } i = 0,\ldots , 2n-1.
\end{align*}
We compute the sequence $\bar{c}$ using classical min-convolution, i.e.
\begin{align*}
    \bar{c} = \bar{a} \oplus \bar{b}.
\end{align*}
The sequence $c = a \oplus_n b$ is then obtained by
\begin{align*}
    c[i] = \min \{\bar{c}[i],\bar{c}[i+n]\} \quad \text{ for } i = 0,\ldots , n-1.
\end{align*}
Applying the mod-min-convolution repeatedly to a sequence $(a[i])_{i=0}^{n-1}$ yields a sequence $a^*$ with
\begin{align*}
    a^* = a \oplus_n a \oplus_n  \cdots,
\end{align*} where $a^* = a^* \oplus_n a$ holds. For this sequence there exist no indices $i,j$ with $a^*[i] + a^*[j] < a^*[i+j \bmod n]$. In this sense $a^*$ is the \emph{transitive mod-min-convolution closure} of the sequence $a$. 
It is easy to see that a solution to the \allvalues~problem is actually the transitive mod-min-convolution closure of the sequence $(a[i])_{i=0}^{a_0-1}$ defined by
\begin{align*}
    a[i] = \min \{ a_j \mid a_j \equiv i \bmod a_0  \}.
\end{align*}
In the following theorem we show that $a^*$ can be computed by using only $\lceil \log \log a_0 \rceil$ many mod-min-convolution operations.

\begin{theorem}\label{thm:subsetsum_by_conv}
    An algorithm for \minconv~with a running time of $T(n)$ implies an algorithm with a running time of 
    \begin{align*}
        \lceil \log \log a_0 \rceil \cdot T(a_0)  + O(a_0 \log a_0)
    \end{align*} for \allvalues.
\end{theorem}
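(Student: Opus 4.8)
The plan is to express the \allvalues~answer as a transitive $\oplus_{a_0}$-closure and to reach that closure in only $\lceil\log\log a_0\rceil$ convolutions by first \emph{binarising} the items. Recall that the residue table equals the transitive $\oplus_{a_0}$-closure $a^*$ of the length-$a_0$ sequence $a[i]=\min\{a_j : a_j\equiv i\bmod a_0\}$, where we set $a[0]=0$. Squaring $a$ repeatedly ($a^{(2)}=a\oplus_{a_0}a$, $a^{(4)}=a^{(2)}\oplus_{a_0}a^{(2)}$, \dots) computes, after $j$ rounds, the pointwise minimum over multisets of at most $2^j$ residues; but the smallest feasible target in a class can genuinely require $\Theta(a_0)$ residues (one large item used $a_0-1$ times), so plain squaring needs $\Theta(\log a_0)$ rounds. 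Binarisation will cut this to $\Theta(\log\log a_0)$.

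\textbf{Construction.}
First I would build, in $O(a_0\log a_0)$ time, the length-$a_0$ ``bit-item'' sequence $e$ with $e[0]=0$ and, for $r\neq 0$,
\[
  e[r]=\min\bigl\{\,2^b a[i] \;:\; 0\le b\le\lceil\log_2 a_0\rceil,\ 2^b i\equiv r\bmod a_0,\ a[i]<\infty\,\bigr\},
\]
generating the residues $2^b i\bmod a_0$ incrementally by doubling. Two routine observations: (i) each bit item $2^b a[i]$ equals $2^b a_j$ for an input item $a_j$, so any multiset of bit items corresponds to adding multiples of items and hence yields a feasible target whose value is the multiset's total and whose residue is the residue sum; since also $e\le a$ pointwise, the $\oplus_{a_0}$-closure of $e$ is again exactly $a^*$. (ii) $e^{(2^j)}$, the $j$-fold square of $e$, is precisely the pointwise minimum, over multisets of at most $2^j$ bit items, of their total value.

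\textbf{The counting step (the crux).}
Fix a class $k$, let $t^*$ be the smallest feasible target with $t^*\equiv k\bmod a_0$, and apply Lemma~\ref{lem:structure_main} to $t^*$ to obtain a solution $y$ with $\prod_{i=1}^n(y_i+1)\le a_0$; by minimality of $t^*$ we may also take $y_0=0$ (otherwise subtract a multiple of $a_0$). Expanding each $y_i$ in base $2$ writes $t^*=\sum_{i\ge 1}\sum_b y_i^{(b)}\,2^b a_i$, a multiset of bit items with residue sum $\equiv k$, total value $t^*$, and cardinality $\sum_{i\ge1}\#\mathrm{ones}(y_i)$. Since the least integer with $s$ ones is $2^s-1$, i.e.\ $2^{\#\mathrm{ones}(m)}\le m+1$, we obtain
\[
  \sum_{i\ge1}\#\mathrm{ones}(y_i)\;\le\;\sum_{i\ge1}\log_2(y_i+1)\;=\;\log_2\!\Bigl(\prod_{i\ge1}(y_i+1)\Bigr)\;\le\;\log_2 a_0 .
\]
Thus $a^*[k]$ is attained by at most $\log_2 a_0\le 2^{\lceil\log\log a_0\rceil}$ bit items, so by (ii) and (i) we already have $e^{(2^{\lceil\log\log a_0\rceil})}=a^*$.

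\textbf{Finishing and the main obstacle.}
It remains to compute $e^{(2)},e^{(4)},\dots,e^{(2^{\lceil\log\log a_0\rceil})}$ by $\lceil\log\log a_0\rceil$ squarings; each squaring is one $\oplus_{a_0}$ operation, realised by a single call to \minconv~on sequences of length $O(a_0)$ plus $O(a_0)$ bookkeeping, and the final sequence is the residue table. Together with the $O(a_0\log a_0)$ for forming $e$ (and the $O(n)=O(a_0)$ to read the input after discarding all but the smallest item in each residue class) this gives the stated bound $\lceil\log\log a_0\rceil\cdot T(a_0)+O(a_0\log a_0)$. I expect the only delicate point to be the counting step: one must resist applying the ``support $\le\log a_0+1$'' corollary directly (individual multiplicities $y_i$ can be as large as $a_0-1$), and instead combine the \emph{multiplicative} bound $\prod(y_i+1)\le a_0$ with the binary expansion, so that it is the number of \emph{summands}, not merely the number of distinct items, that becomes logarithmic.
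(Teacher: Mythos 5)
Your proposal is correct and follows essentially the same route as the paper: binarise the items into the "bit-item" sequence (the paper's set $V$ and sequence $A$), invoke the multiplicative bound $\prod_{i\ge 1}(x_i+1)\le a_0$ from Lemma~\ref{lem:structure_main} together with binary expansion of the multiplicities to show each minimal target is a sum of at most $\log_2 a_0$ bit items, and reach the closure by $\lceil\log\log a_0\rceil$ repeated squarings. Your "counting step" is exactly the paper's argument (stated slightly more carefully via $2^{\#\mathrm{ones}(m)}\le m+1$), so there is nothing to add.
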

\begin{proof}
    Define the set $V$, which consists of the set of item sizes $S$ and multiples of $a_i \in S$ which are powers of $2$, i.e.
    \begin{align*}
        V = \bigcup_{1 \leq i \leq n} \left( \bigcup_{0 \leq j \leq \log(a_0)} 2^j a_i \right)
    \end{align*}
    Obviously, every element of $V$ is a feasible target value and also every sum of elements of $V$ is a feasible target value.
    
    Define the sequence $A$ of length $a_0$ by
    \begin{align*}
        A[i] = \min \{ t \in V \mid t \equiv i \bmod a_0 \}.
    \end{align*}
    The sequence $A$ can be computed in $O(a_0 \log a_0)$ time in a straightforward way. Starting with $a_1$, each item $a_i$ is iteratively multiplied by $2$ and the value $2^j a_i$ is stored in $A[2^j a_i \bmod a_0]$ if it is smaller than its current content (the sequence is initialized by infinity values).
    
    Compute the sequence \begin{align*}
        A^{*} = \underbrace{A \oplus_{a_0} \ldots \oplus_{a_0} A}_{\lfloor \log a_0\rfloor}
    \end{align*}
    which is obtained by a $\lfloor \log a_0 \rfloor$-fold mod-convolution of $A$.
    
    \textbf{Claim:} The sequence $A^{*}$ is the solution to the \allvalues~problem.
    
    Clearly, each element $A^{*}[i]$ is a feasible target value with $A^{*}[i] \equiv i \bmod a_0$ as it is contained in the sum of elements of $V$.
    
    On the other hand, let $t_i$ be the minimum feasible target value with $t_i \equiv i \bmod a_0$.
    By Lemma~\ref{lem:structure_main} we know that for each $t_i \in \ZZgeq$ there exists a solution $x \in \ZZgeq^{n}$ with $\sum_{i=1}^n \log (x_i+1) \leq \log a_0$ (note that $x_0 = 0$ otherwise, $t_i$ would not be minimal). This implies that $t_i = x_1 a_1 + \ldots + a_n x_n$ can be written by the sum of 
    \begin{align*}
        \sum_{i=1}^n \log (x_i +1) \leq \log a_0
    \end{align*}
    many elements of $V$ (by binary encoding, each multiplicity $x_i$ of an element $a_i$ is contained in the sum of at most $\log (x_i +1)$ multiplicity elements $2^j a_i$ contained in $V$). Therefore, every $t_i$ is contained in the $\lfloor \log a_0 \rfloor$-fold addition of $V$, i.e. $t_i \in \underbrace{V \oplus_{a_0} \ldots \oplus_{a_0} V}_{\lfloor \log a_0\rfloor}$ which is equivalent to $\lfloor \log a_0 \rfloor$-fold Min-Convolution of $A$ and hence $t_i = A^{*}[i]$ which proves the claim.
    
    By using a binary exponentiation argument we can reduce the number of calls to the mod-convolution procedure further and compute $A^{*}$ more efficiently. Therefore, define
    \begin{align*}
        B^{(i)} = \underbrace{A \oplus_{a_0} \ldots \oplus_{a_0} A}_{2^i}.
    \end{align*}
    The sequences $B^{(i)}$ can be computed efficiently using $i$ calls to the mod-convolution procedure by $B^{(1)} = A$ and $B^{(i)} = B^{(i-1)} \oplus_{a_0} B^{(i-1)}$ for $i > 1$. Finally, $A^{*} = B^{r}$ is obtained for $r \geq \lceil \log \log a_0 \rceil$. Note that as the entries of $A$ are minimal, we have that $A^{*} \oplus_{a_0} A = A^{*}$ and therefore it does not hurt to apply too many mod-convolutions on $A$.
\end{proof}
    Using the famous algorithm of Williams~\cite{williams2018faster_apsp} to solve ~\minconv~in time $\frac{n^2}{2^{\Omega( \log n)^{1/2}}}$ yields improved algorithms for the considered problems. By this we improve upon the algorithm of Böcker and Lipt{\'a}k~\cite{bocker2007_frobenius_algorithm} in the case that the number of items $n$ is not small, i.e. $n \in \Omega(\frac{n}{\log^c n})$ for any constant $c$ (note that the term $2^{\Omega( \log n)^{1/2}}$ grows asymptotically faster than any polylogarithmic function).
\begin{corollary}
    The \subsetsum, the \frobenius~and the \allvalues~problem can be solved with a running time of
    \begin{align*}
        \frac{a_{0}^2}{2^{\Omega( \log a_0)^{1/2}}}.
    \end{align*}
\end{corollary}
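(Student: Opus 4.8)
The plan is to combine Theorem~\ref{thm:subsetsum_by_conv} with the truly subquadratic \minconv~algorithm of Williams~\cite{williams2018faster_apsp}, and then carry out the (routine) asymptotic bookkeeping needed to absorb the lower-order factors into the stated bound.

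First I would substitute $T(n) = \frac{n^2}{2^{\Omega( \log n)^{1/2}}}$ into the running-time estimate of Theorem~\ref{thm:subsetsum_by_conv}. This directly yields an algorithm for \allvalues~with running time
\[
    \lceil \log \log a_0 \rceil \cdot \frac{a_0^2}{2^{\Omega( \log a_0)^{1/2}}} + O(a_0 \log a_0).
\]
Next I would argue that both the multiplicative factor $\lceil \log \log a_0 \rceil$ and the additive term $O(a_0 \log a_0)$ are dominated by the first term, at the cost of a possibly smaller constant in the $\Omega$. The key observation is that $2^{\Omega( \log a_0)^{1/2}}$ grows asymptotically faster than any fixed polylogarithm of $a_0$: any such polylogarithm is of the form $2^{O(\log \log a_0)}$ and $\log \log a_0 = o((\log a_0)^{1/2})$, so multiplying $2^{-\Omega( \log a_0)^{1/2}}$ by $\lceil \log \log a_0 \rceil$ only shifts the exponent by a lower-order amount. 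For the additive term, note that $2^{\Omega( \log a_0)^{1/2}} = a_0^{o(1)}$ and $\log a_0 = a_0^{o(1)}$, hence $a_0 \log a_0 \cdot 2^{\Omega( \log a_0)^{1/2}} = a_0^{1+o(1)} = o(a_0^2)$, i.e.\ $O(a_0 \log a_0) = O\!\left(\frac{a_0^2}{2^{\Omega( \log a_0)^{1/2}}}\right)$. Together this gives the claimed bound for \allvalues.

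Finally, for \subsetsum~and \frobenius~I would invoke the reductions already spelled out in the introduction: from the residue table computed by the \allvalues~algorithm one answers any \subsetsum~query in $O(1)$ time and obtains $F = \max_i\{a[i]\} - a_0$ in $O(a_0)$ further time. Since an $O(a_0)$ overhead is again swallowed by the denominator, the same running time $\frac{a_0^2}{2^{\Omega( \log a_0)^{1/2}}}$ holds for all three problems.

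There is no real mathematical obstacle here; the entire content is carried by Theorem~\ref{thm:subsetsum_by_conv} and Williams' theorem. The only step demanding a little care is the asymptotic bookkeeping — making precise that $2^{\Omega( \log a_0)^{1/2}}$ dominates every fixed polylogarithm of $a_0$, so that the $\lceil \log \log a_0 \rceil$ factor and the $O(a_0 \log a_0)$ term can both be hidden inside the denominator, possibly weakening the implied $\Omega$-constant.
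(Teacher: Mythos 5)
Your proposal is correct and follows exactly the route the paper intends: plug Williams' $\frac{n^2}{2^{\Omega(\log n)^{1/2}}}$ bound for \minconv~into Theorem~\ref{thm:subsetsum_by_conv}, absorb the $\lceil \log \log a_0 \rceil$ factor and the $O(a_0 \log a_0)$ term into the denominator (since $2^{\Omega(\log a_0)^{1/2}}$ outgrows every fixed polylogarithm), and read off \subsetsum~and \frobenius~from the residue table. The asymptotic bookkeeping you spell out is exactly what the paper leaves implicit.
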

Furthermore, it is known that \minconv~can be solved for random instances in near linear expected time~\cite{bussieck1994min_conv_random_instances}. Using this algorithm as a subroutine might yield a very efficient practical algorithm that solves \allvalues.

\subsection{Hardness of \subsetsum~and \frobenius}
In the following we show hardness of the \subsetsum~and the \frobenius~problem by a reduction from \subadd, a problem that is known to be subquadratically equivalent to \minconv \cite{cygan2019_minconv}. The problem of \subadd~is to decide for a given sequence $(a[i])_{i=0}^{n-1}$ if there exist indices $i,j$ such that $a[i] + a[j] < a[i+j \bmod n]$. We consider the modulo version of the problem which we call \modsubadd.
\begin{tcolorbox}
    \modsubadd
    
    \textbf{Input:} Sequence $(a[i])_{i=0}^{n-1}$\\
    \textbf{Objective:} Decide whether there exist indices $i,j$ such that $a[i] + a[j] < a[i+j \bmod n]$.
\end{tcolorbox}
In other words, the problem is to decide if a given sequence $a$ is idempotent under a mod-min-convolution operation, i.e. if $a \oplus_n a = a$.

It is easy to see that \modsubadd~is subquadratically equivalent to \subadd. 
\begin{prop}
    If there exists an algorithm with a running time of $T(n)$ for \subadd~then there exists an algorithm with a running time of $T(2n) + O(n)$ for \modsubadd~and vice versa.
\end{prop}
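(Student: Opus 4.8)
The plan is to establish both implications by elementary padding reductions, each of which doubles the instance length and needs $O(n)$ further work. Recall that \subadd\ on a length-$m$ sequence $c$ asks whether there are indices $i,j$ with $i+j\le m-1$ and $c[i]+c[j]<c[i+j]$ (the non-cyclic condition), whereas \modsubadd\ imposes the cyclic condition with $i+j$ read modulo $m$. As is standard for these problems, and harmless here, I treat the sequences as consisting of non-negative integers.

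First I would reduce \modsubadd\ on $a[0],\dots,a[n-1]$ to \subadd\ on the duplicated sequence $\bar a[0],\dots,\bar a[2n-1]$ defined by $\bar a[k]=a[k\bmod n]$. A \subadd-witness $(i,j)$ for $\bar a$ satisfies $i+j\le 2n-1$, and reducing both indices modulo $n$ turns it into a \modsubadd-witness for $a$, because $\bar a[i]=a[i\bmod n]$, $\bar a[j]=a[j\bmod n]$ and $\bar a[i+j]=a[(i+j)\bmod n]=a[((i\bmod n)+(j\bmod n))\bmod n]$. Conversely a \modsubadd-witness $(i,j)$ for $a$ with $i,j\in\{0,\dots,n-1\}$ is already a \subadd-witness for $\bar a$, since $i+j\le 2n-2\le 2n-1$. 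Building $\bar a$ costs $O(n)$, so a $T(\cdot)$-time algorithm for \subadd\ yields a $T(2n)+O(n)$-time algorithm for \modsubadd.

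For the reverse direction I would reduce \subadd\ on $a[0],\dots,a[n-1]$ to \modsubadd\ on a length-$2n$ sequence $b$, designed so that the cyclic wraparound can never fake a violation. Fix an integer $K>\max_i a[i]$ and set $b[k]=a[k]+kK$ for $0\le k\le n-1$ and $b[k]=kK$ for $n\le k\le 2n-1$; equivalently $b[k]=a'[k]+kK$, where $a'$ is $a$ padded by zeros up to length $2n$. The point is that for any pair $(i,j)$ with $i+j\le 2n-1$ the linear terms cancel, so $b[i]+b[j]<b[(i+j)\bmod 2n]=b[i+j]$ holds iff $a'[i]+a'[j]<a'[i+j]$; since the padded entries of $a'$ are $0$ and the genuine ones are non-negative, this can happen only when $i$, $j$ and $i+j$ all lie in $\{0,\dots,n-1\}$, in which case it is precisely a \subadd-witness for $a$. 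For a pair $(i,j)$ with $i+j\ge 2n$ one computes $b[i]+b[j]-b[(i+j)\bmod 2n]=a'[i]+a'[j]-a'[i+j-2n]+2nK$, which is strictly positive by the choice of $K$, so wraparound pairs never produce a violation. Hence $b$ is cyclically subadditive iff $a$ is subadditive; $b$ is built in $O(n)$ time, giving a $T(2n)+O(n)$-time algorithm for \subadd\ from any $T(\cdot)$-time algorithm for \modsubadd.

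The only genuine idea is the steep ramp $kK$ in the second reduction: it has to dominate the bounded fluctuation of the $a$-part in every wraparound comparison (forcing the extra $+2nK$) while cancelling exactly in the honest, non-wrapping comparisons, and together with the zero padding it pins the cyclic violations of $b$ to exactly the \subadd-witnesses of $a$. The remaining work — the case analysis over which of $i,j$ fall in the padded range, and small index facts such as that $i+j=2n-1$ is unreachable from two indices below $n$ — is routine. I expect the verification of the wraparound case (and choosing $K$ large enough while keeping the numbers only polynomially larger) to be the one spot needing care.
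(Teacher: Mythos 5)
Your proof is correct. The first direction (reducing \modsubadd\ to \subadd\ by duplicating the sequence to length $2n$) is exactly the paper's argument. The second direction, however, takes a genuinely different route. The paper pads the length-$n$ sequence with $-\infty$ entries up to length $2n$ and asserts that the cyclic violations of the padded sequence coincide with the non-cyclic violations of the original; taken literally this padding is fragile, since a pair of indices both lying in the padded half has sum $-\infty$, and if the wrapped target index falls back into the genuine half one gets a spurious cyclic witness (and replacing $-\infty$ by $+\infty$ creates the symmetric problem when the target index is padded). Your construction $b[k]=a'[k]+kK$ with zero padding and a steep linear ramp avoids this entirely: the ramp cancels exactly on non-wrapping pairs, reducing the test to subadditivity of the zero-padded sequence (which, for non-negative entries, is exactly subadditivity of $a$), and contributes $+2nK$ on wrapping pairs, which dominates any fluctuation of the $a$-part and rules out wraparound witnesses. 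The price is the explicit non-negativity assumption on the entries (which you flag, which is standard in this context, and which is genuinely used when you argue that a violation with target index in the padded range is impossible) and a modest $O(nK)$ blow-up of the numbers; what you gain is a reduction whose correctness survives the wraparound case analysis that the paper's padding glosses over.
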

\begin{proof}
    Given is a sequence $(a[i])_{i=0}^{n-1}$ in the \modsubadd~problem. Construct a sequence $(b[i])_{i=0}^{2n-1}$ of double length defined by
    \begin{align*}
        b[i] = a[i \bmod n] \quad \text{ for } i = 1,\ldots , 2n
    \end{align*}
    There exist indices $i,j$ such that $b[i] +b[j] < b[i+j]$ if and only if there exist indices $i,j \leq n$ with $a[i] + a[j] < a[i+j \bmod n]$. Therefore, an algorithm of running time $T(n)$ for the \subadd~problem implies an algorithm of running time $T(n) + O(n)$ for the \modsubadd~ problem.

    On the other hand, let $(a[i])_{i=1}^n$ be a sequence in the \subadd~problem. Construct a sequence $(b[i])_{i=1}^{2n}$ of double length defined by
    \begin{align*}
        b[i] = \begin{cases} a[i] & \text{ for $i \leq n$}\\ - \infty & \text{ otherwise.}\end{cases}
    \end{align*}
    There exist indices $i,j$ such that $b[i] +b[j] < b[i+j]$ if and only if there exist indices $i,j \leq n$ with $a[i] + a[j] < a[i+j]$. Therefore, an algorithm of running time $T(n)$ for the \modsubadd~problem implies an algorithm of running time $T(n) + O(n)$ for the \subadd~problem.
\end{proof}
In the following theorem we show the reduction of \modsubadd~(respectively \subadd) to \subsetsum.
\begin{theorem}\label{thm:hardness_subsetsum}
    An algorithm for \subsetsum~with a running time of $T(a_0)$ implies an algorithm with a running time of $T(n) + O(n)$ for \subadd.
\end{theorem}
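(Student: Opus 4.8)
The plan is to reduce \modsubadd\ -- which by the previous proposition is subquadratically equivalent to \subadd\ -- to a \emph{single} instance of \subsetsum\ whose smallest item is $a_0 = n$, together with a single target $t^*$, all computable in time $O(n)$, so that $t^*$ is feasible if and only if the input sequence $(a[i])_{i=0}^{n-1}$ is \emph{not} idempotent under $\oplus_n$, i.e.\ if and only if there are $i,j$ with $a[i]+a[j] < a[(i+j)\bmod n]$. Feeding this instance to the hypothetical $O(T(a_0))$ algorithm for \subsetsum\ then decides \modsubadd\ in time $T(n) + O(n)$.

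The construction rests on the fact already exploited for Theorem~\ref{thm:subsetsum_by_conv}: when the smallest item of a \subsetsum\ instance equals $n$, a target $t$ is feasible iff $t \ge a^{table}[t \bmod n]$, where $a^{table}$ is the transitive mod-$n$ min-convolution closure of the base sequence $A[i]=\min\{a_j : a_j \equiv i \bmod n\}$. Writing $a^*$ for the closure of the \modsubadd\ input, one always has $a^* \le a$ pointwise, $(a[i])$ is idempotent exactly when $a^*=a$, and whenever $a^*[k]<a[k]$ one can contract a minimal witnessing sum down to length two; hence \modsubadd\ is a \textsc{Yes}-instance precisely when $\min_k(a^*[k]-a[k])<0$. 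So it suffices to build a \subsetsum\ instance in which one designated residue of $a^{table}$ records this minimum.

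I would do this with a ``refund gadget''. First normalise by replacing $a[i]$ with $(n+1)a[i]+i$; this preserves idempotency, makes the maximum $V$ attained at a unique index $r^*$, and makes all values with index $\ge 1$ positive. Then fix a large multiple of $n$, $M = n^{O(1)}$, and use as items: the dummy item $n$ (present only so that $a_0 = n$; being $\equiv 0 \bmod n$ it is useless in any minimal solution), the encoding items $c_i = i + M a[i]$ for $1\le i\le n-1$ (residue $i$, ``value'' $a[i]$), a small unit filler at scale $M$, and the refund items $g_k$ for $k\ne r^*$ at residue $(r^*-k)\bmod n$ with ``value'' $V-a[k] \ge 1$, carrying an additional tag that -- together with the shape of $t^*$ -- is meant to force exactly one refund item into every solution. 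A legal combination then accumulates $a^*[k]$ worth of value at residue $k$ via the $c_i$'s and rotates to residue $r^*$ at extra cost $V-a[k]$ via $g_k$, so the minimum over $k$ is $\approx M\bigl(V + \min_k(a^*[k]-a[k])\bigr)$, which equals $MV$ in the idempotent case and drops by at least $M$ otherwise; placing $t^*$ in the narrow window just below $MV$ (and above $M(V-1)$ plus the largest possible lower-order contribution) makes feasibility of $t^*$ equivalent to non-idempotency.

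The hard part is showing that no unintended combination reaches $t^*$: solutions that use no refund item at all (these only witness whether $r^*$ itself is bad, a harmless special case), solutions that use two or more refund items, and, most delicately, solutions that pile up many copies of the tiny dummy item $n$ so that the accumulated carries forge the refund tag. Controlling the last point is exactly why $M$ must be polynomially large and why one needs Lemma~\ref{lem:structure_main}: a feasible target admits a solution with $\prod_{i\ge 1}(x_i+1)\le a_0=n$, so a minimal witness uses only $O(n)$ non-dummy items and all carry terms are $O(n^2)\ll M$, while the magnitude of $t^*$ caps how many dummy items a solution can contain; together these pin every digit of a solution summing to $t^*$ to its intended value. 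The remaining details -- the exact representative of $r^*$ put into $t^*$, the treatment of residue $0$ and of the case $a[0]=0$ (simply drop $c_0$), and the arithmetic of the windows -- are routine.
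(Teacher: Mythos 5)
Your high-level strategy is the same as the paper's: reduce \modsubadd\ to a single \subsetsum\ instance by encoding each index $i$ as an item whose low-order part (mod $a_0$) records the residue $i$ and whose high-order part records the value $a[i]$, adding one ``complementary''/refund item per residue, and placing the target at a threshold so that it is reachable exactly when some $a[i]+a[j]$ undercuts $a[(i+j)\bmod n]$. The contraction of a long witnessing sum down to a length-two violation, which you mention in passing, is also exactly the paper's closing inductive argument. So the skeleton is right.

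The genuine gap is soundness, and it sits precisely where you wave your hands. Everything hinges on forcing every feasible solution to use \emph{exactly one} refund item and on preventing combinations of forward items alone (or of several refund items) from landing on the target residue with a small enough sum. You defer this to an unspecified ``tag'' plus ``the arithmetic of the windows'', but this is the entire content of the reduction: without it, e.g.\ a zero-refund combination whose offsets $\sum_{i\in I} i$ happen to hit $t^*\bmod a_0$ gives a false positive. The paper's device is concrete and cheap: set $a_0=2n$ rather than $n$, make every forward item $a_i=a_0a[i]+2i$ \emph{even}, every refund item $\bar a_i=a_0(M-a[i])-2i-1$ \emph{odd} with offset $-2i-1$, and choose $t\equiv-1\pmod{a_0}$. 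Then parity mod $a_0$ forces at least one refund item, a size bound ($\bar a_i>t/2$) forces at most one, and the offsets cancel to give the congruence $p\equiv\sum_{i\in I}i\pmod{a_0}$ for free. Your proposal contains no equivalent mechanism, and inventing one is not routine. A second, related problem is that you decouple $a_0=n$ from the value scale $M$: the unbounded item $a_0$ then contributes arbitrary multiples of $n$ that spill into the high-order ``value'' digits, which is the carry-forging issue you yourself flag. Appealing to Lemma~\ref{lem:structure_main} does not fix this, since that lemma bounds the non-$x_0$ multiplicities of \emph{some} solution but places no bound on $x_0$. The paper sidesteps the issue entirely by using $a_0$ itself as the value scale, so that copies of $a_0$ change only the high-order part and never the residue; the slack is then absorbed as the term $ra_0$ with $r\ge1$. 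As written, your reduction is a plausible plan, not a proof.
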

\begin{proof}
    Given is a sequence $a[1], \ldots , a[n-1]$. Since $a[0] + a[i] \geq a[i+0]$ the entry $a[0]$ does not need to be considered as it can not be used to violate the modular-subadditivity property of a sequence.
    
    The items of the \subsetsum~instance are defined by 
    \begin{align}\label{instance:subsetsum}
        a_0 &= 2n\\ \notag
        a_i &= a_0 a[i] + 2i \qquad \qquad \qquad \text{for $i = 1, \ldots , n-1$} \\ \notag
        \bar{a}_i &= a_0(M-a[i]) - 2i -1 \qquad \text{for $i = 1, \ldots , n-1$}, \notag
    \end{align}
    where $M$ is defined by $M = 2 ( \max_{i \in \{1, \ldots ,n-1\}} a[i] + n)$.
    We define the target value $t$ by 
    \begin{align*}
        t = a_0(M-1) -1.
    \end{align*}
    
    Suppose that the sequence $a$ is not modulo-subadditive. Then we can build a solution of (\ref{IP1}) in the following way:
    Given are indices $i,j$ with $a[i] + a[j] < a[i+j \bmod a_0]$. Summing up the item sizes of $\bar{a}_{(i+j \mod p)}, a_i$ and $a_j$ we obtain that
    \begin{align*}
        & \quad \bar{a}_{(i+j \bmod p)} + a[i] + a[j]\\
        &= \,a_0 (M-a[i+j \bmod a_0]) - 2(i+j) -1 + a_0 a[i] + 2i + a_0 a[j] + 2j\\
        & = \,a_0 (M - a[i+j \bmod a_0] + a[i] + a[j]) - 1 \\
        & = \, a_0 (M-r) - 1,
    \end{align*}
    for some $r \geq 1$. Hence $\bar{a}_{(i+j \mod p)} + a_i + a_j + ra_0 = t$ which proves feasibility of (\ref{IP1}).
    
    In the following we show that the converse is also true, i.e. if there exists a solution of (\ref{IP1}) then the sequence $a$ is not modulo-subaddtive.
    
    \textbf{Observation 1:} A feasible solution of $(\ref{IP1})$ has to contain exactly one item $\bar{a}_i$.\\
    Since $t<M$ and every item $\bar{a}_i > M/2$, a feasible solution can contain at most one item $\bar{a}_i$. On the other hand, since every item $a_i$ is even and $a_0$ is also even, a sum with indices $I \subset [n-1]$ of items fulfills $\sum_{i \in I} a_i \equiv 2y \bmod a_0$ for some $y\in \ZZ_{a_0}$. This implies that $\sum_{i \in I} a_i \not\equiv t \equiv -1 \bmod a_0$. 
    Therefore, a feasible solution has to contain at least one item $\bar{a}_i$.
    
    \textbf{Observation 2:} Consider a feasible solution of (\ref{IP1}) using one item $\bar{a}_p$ and items $\{a_i \mid i \in I\}$ for some $I \subseteq [n-1]$. Then $p \equiv \sum_{i \in I} i \bmod a_0$ holds.\\
    Because of the feasibility, we know that $\bar{a}_p + \sum_{i \in I} i \equiv t \bmod a_0$, which implies
    \begin{align*}
        \bar{a}_p + \sum_{i \in I} i &\equiv t \bmod a_0 \\
        \Leftrightarrow \quad -2p -1 + \sum_{i \in I} 2i &\equiv -1 \bmod a_0 \\
        \Leftrightarrow \quad \sum_{i \in I} 2i &\equiv 2p \bmod a_0\\
        \Leftrightarrow \quad \sum_{i \in I} i &\equiv p \bmod a_0
    \end{align*}
    
    
    Using the observations above, we can now show that there exists a feasible solution of (\ref{IP1}) only if the sequence $a[1], \ldots , a[n-1]$ is not modulo-subadditive.
    
    Suppose that there exists a feasible solution of (\ref{IP1}). Then by Observation~1 and 2 the solution uses a single item $\bar{a}_p$ and a set of items $\{a_i \mid i \in I \}$ for some $I \subseteq [n-1]$ with $p \equiv \sum_{i \in I} i \bmod a_0$.
    The solution is feasible only if 
    \begin{align*}
        \bar{a}_p + \sum_{i \in I} a_i & \leq t\\
        \Leftrightarrow \quad a_0 (M-\bar{a}[p] + \sum_{i \in I} a[i]) - 1 & \leq a_0(M-1)-1 \\
        \Leftrightarrow \quad -a[p] + \sum_{i \in I} a[i] &\leq -1 \\
        \Leftrightarrow \quad \sum_{i \in I} a[i] &<  a[p]
    \end{align*}
    This implies that there exist indices $i_1, \ldots, i_{|I|}$ with $i_1+ \ldots + i_{|I|} \equiv p \bmod a_0$ such that $a[i_1]+ \ldots + a[i_l] < a[p]$. Note that $I$ can not be the empty set as there is no item $\bar{a}_p$ with $\bar{a}_p \equiv t \bmod a_0$. Also $|I| \neq 1$ holds as
    \begin{align*}
        \bar{a}_p + a_p = a_0 M -1 > t.
    \end{align*}
    For $|I| = 2$ we know there exist indices $i,j \in I$ with $a[i] + a[j] < a[i+j]$ and therefore the sequence $a[1], \ldots a[n-1]$ is not mod-subadditive.
    In the case of $|I|>2$ we can set $i = i_1$ and $j = i_2 + \ldots + i_{|I|} \bmod a_0$. Hence, either $a[j] < a[i_2] + \ldots + a[i_{|I|}]$ holds and we can proceed inductively with the smaller index set $I' = \{i_2, \ldots , i_{|I|} \}$ or 
    \begin{align*}
        a[i+j \bmod a_0] >  a[i_1] + a[i_2] + \ldots + a[i_{|I|}] \geq a[i] + a[j]
    \end{align*}
    holds and hence mod-subadditivity is violated for indices $i$ and $j$.
\end{proof}
Taking a close look at instance~(\ref{instance:subsetsum}) yields the same statement for \frobenius.
\begin{theorem}\label{thm:hardness_frobenius}
    An algorithm for the \frobenius~problem with a running time of $T(a_0)$ implies an algorithm with a running time of $T(n) + O(n)$ for \subadd.
\end{theorem}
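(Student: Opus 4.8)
The plan is to re-use, unchanged, the \subsetsum~instance and target value $t = a_0(M-1)-1$ constructed in the proof of Theorem~\ref{thm:hardness_subsetsum} (with $a_0 = 2n$, $a_i = a_0 a[i] + 2i$, $\bar a_i = a_0(M - a[i]) - 2i - 1$ and $M = 2(\max_i a[i] + n)$), and to read off the answer to \subadd~from the Frobenius number $F$ of that instance rather than from the feasibility of the single target $t$. Concretely, I would prove that $F = t$ if and only if the input sequence $a$ is modulo-subadditive. Since building the instance and comparing $F$ with $t$ both take $O(n)$ time, a \frobenius~algorithm running in time $T(a_0)$ then decides \subadd~in time $T(n) + O(n)$, exactly as claimed.

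The equivalence between $F=t$ and modulo-subadditivity of $a$ would be established in three steps. (1) Check that the constructed items have $\gcd 1$, so the instance is a legal \frobenius~input and $F$ is finite: this follows because $a_1 + \bar a_1 = a_0 M - 1$ is coprime to $a_0 = 2n$, hence $\gcd(a_0, a_1, \bar a_1) = 1$. (2) Analyse the residue table $a[0],\dots,a[a_0-1]$. The target $a_0 M - 1 = \bar a_p + a_p$ is feasible for every $p$ and lies in residue class $-1 \bmod a_0$, so $a[a_0-1] \le a_0 M - 1$; and every other class is cheap: $a[0]=0$, each even class $2m$ with $1\le m\le n-1$ is the residue of the single item $a_m$, of value at most $a_0\max_i a[i] + 2n$, and each odd class other than $-1$ is the residue $-2p-1 \bmod a_0$ of the single item $\bar a_p$ for some $1\le p\le n-1$, of value at most $a_0 M - 3$. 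Under the stated choice of $M$ both of these bounds are strictly below $a_0 M - 1$, and the residues $2m \bmod a_0$ and $-2p-1 \bmod a_0$ for $1\le m,p\le n-1$ exhaust all classes except $-1$. (3) Combine this with the feasibility characterisation already proved for Theorem~\ref{thm:hardness_subsetsum}: $t$ is feasible iff $a$ is not modulo-subadditive. If $a$ is modulo-subadditive then $t$ is infeasible, so $a[a_0-1] > t = a_0 M - a_0 - 1$; together with $a[a_0-1]\le a_0 M - 1$, $a[a_0-1]\equiv -1 \bmod a_0$, and the fact that $a_0 M - 1$ is the unique integer $\equiv -1 \bmod a_0$ in $(t,\,a_0 M - 1]$, we get $a[a_0-1] = a_0 M - 1$, whence $F = \max_c a[c] - a_0 = (a_0 M - 1) - a_0 = t$. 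If $a$ is not modulo-subadditive then $a[a_0-1] \le t$, while every other entry satisfies $a[c] < a_0 M - 1$, so every $a[c] - a_0$ lies strictly below $t$; hence $F < t$. Thus $F \le t$ always, and $F = t$ precisely when $a$ is modulo-subadditive.

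The step that requires care is the residue-table bookkeeping in (2): one must verify that the two families of residues really cover every class except $-1$, and that the magnitude estimates $a_0\max_i a[i] + 2n < a_0 M - 1$ and $\bar a_p \le a_0 M - 3 < a_0 M - 1$ hold under $M = 2(\max_i a[i]+n)$. I expect no genuine difficulty here, since it is the same order-of-magnitude arithmetic already used for Theorem~\ref{thm:hardness_subsetsum}, and the construction anyway requires $a[i]\ge 0$ (so that all items are non-negative integers), which we may assume for \subadd~as usual. The one point that must be done in full is establishing $F \le t$ in \emph{both} cases, not merely $F \neq t$, so that the single comparison ``$F = t$'' decides the problem; the degenerate cases $n\le 1$ are handled separately. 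The conceptual input — that $t$ is feasible iff $a$ is not modulo-subadditive — is inherited directly from the proof of Theorem~\ref{thm:hardness_subsetsum}.
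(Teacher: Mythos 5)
Your proposal is correct and follows the same route as the paper: reuse the \subsetsum~instance of Theorem~\ref{thm:hardness_subsetsum} verbatim, observe that every residue class other than $-1 \bmod a_0$ is served by a single item of size $< a_0M-1$, and read the answer off the Frobenius number of the instance. In fact your decision criterion is the more careful one. The paper claims ``$F < a_0M-1$ iff $a$ is not mod-subadditive,'' but since $a_p+\bar a_p = a_0M-1$ is always feasible, the minimal feasible value in class $-1$ is at most $a_0M-1$ and hence $F \le a_0M-1-a_0 = a_0(M-1)-1 < a_0M-1$ in \emph{both} cases; the threshold should be $t = a_0(M-1)-1$, exactly as you derive ($F=t$ iff mod-subadditive, $F<t$ otherwise). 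Your additional checks --- the $\gcd$ argument via $a_1+\bar a_1 = a_0M-1$ guaranteeing $F$ is finite, and the explicit verification that the residues $2m$ and $-2p-1$ exhaust all classes except $-1$ --- are implicit in the paper's table but worth making explicit. No gaps.
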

\begin{proof}
    Given is the sequence $a[0], \ldots , a[n-1]$. We define the instance of item sizes for the \frobenius~problem just as in the previous Theorem~\ref{thm:hardness_subsetsum}:
    \begin{align*}
        a_0 &= 2n\\ \notag
        a_i &= a_0 a[i] + 2i \qquad \qquad \qquad \text{for $i = 1, \ldots , n-1$} \\ \notag
        \bar{a}_i &= a_0(M-a[i]) - 2i -1 \qquad \text{for $i = 1, \ldots , n-1$}, \notag
    \end{align*}
    We will show that the frobenius number of this instance is $< a_0 M -1$ if and only if the sequence $a$ is not mod-subadditive.
    
    Observe in the table below that for every target value $t \not\equiv -1 \bmod a_0$ there is an item $a_i$ or $\bar{a}_i$ such that $t \equiv a_i \bmod a_0$ (if $t$ is even) or $t \equiv \bar{a}_{i} \bmod a_0$ (if $t$ is odd).
    \begin{center}
        \begin{tabular}{|c|c|c|c|c|c|c|c|c|c|c|c|c|}
        \hline
         Item\, $\mod a_0$& $0$ & $1$ & $2$& $3$ & $4$ & \phantom{$a_1$} & $\cdots$ & \phantom{$a_1$} &  $2n-4$ & $2n-3$ & $2n-2$ & $2n-1$ \\ \hline
         Item & $a_0$ & $\bar{a}_{n-1}$ & $a_1$ & $\bar{a}_{n-2}$ & $a_2$ & & $\cdots$ & &$a_{n-2}$ & $\bar{a}_{1}$ & $a_{n-1}$ &\\
         \hline
    \end{tabular}
    \end{center}
    Since the item sizes $a_i$ and $\bar{a}_i$ are $< a_0 M -1$ every instance with a target value $t \not\equiv -1 \bmod a_0$ has a solution $< a_0 M -1$.
    Consider a target value $t \equiv -1 \bmod a_0$: According to the proof of the previous theorem there exists a solution for $t = a_0 (M-1) -1$ if and only if the sequence $a$ is not mod-subadditive.
    This implies that the frobenius number of the instance is $< a_0 M -1$ if and only if the sequence $a$ is not mod-subadditive.
\end{proof}

\section{The Complexity of Parameterization by $a_n$}
In this section, we present algorithms for $\subsetsum$, \frobenius~and \allvalues~parameterized by the largest item size $a_n$. An algorithm for \subsetsum~with a running time of $O(a_n \log^2 a_n)$ was already known. The algorithm by Jansen and Rohwedder~\cite{Jansen_rohwedder_IP_convolution} solves general feasibility IPs for given matrix $A \in \ZZ^{m \times n}$ and vector $b \in \ZZ^m$ of the form \begin{align*}
    Ax = b\\ x \in \ZZgeq^n
\end{align*}
with a running time of $O((\sqrt{m} \norinf{A})^m \log \norinf{A} \cdot\log (\norinf{A} + \norinf{b})) + O(nm)$. Using this algorithm for $m=1$ in combination with the proximity result of Eisenbrand and Weismantel~\cite{eisenbrand2018proximity} in order to reduce $t$ (i.e. $t \leq a_{n}^2$) leads to the mentioned running time for solving \subsetsum.

In Section~\ref{sec:alg_subsetsum_an} we describe a modified and simplified version of the algorithm by Jansen and Rohwedder~\cite{Jansen_rohwedder_IP_convolution} for solving \subsetsum~with a running time of $O(a_{n} \log^2 a_{n})$. In contrast to their algorithm, our algorithm computes all feasible target values for a certain fixed set of intervals of length $a_n$. This has the advantage that our algorithm is able to detect if the Frobenius number of the instance is exceeded and terminates. By this we do not need to use the integer programming proximity to reduce the target value $t$ and instead use a bound on the Frobenius number which is in general better. Therefore, our modified algorithm improves upon the running time if $F<a_{n}^2$. However, the main reason why we present this version of the algorithm is that we can show in Section~\ref{sec:alg_frobenius}, that our modified algorithm can be easily adapted to compute a solution of the \frobenius~problem.

In the algorithm, we compute the \emph{sumset} $S+T$ of two sets $S,T \subset \ZZ$ defined by
\begin{align*}
    S + T = \{s+t \mid s \in S, t \in T \}.
\end{align*}
For two sets $S \subseteq I \cap \ZZ$ and $T \subseteq I' \cap \ZZ$ that are contained in intervals $I,I'$ of size $n$, it is well known that the sumset $S+T$ can be computed in time $O(n \log n)$. This can done by a reduction to boolean convolution which can be solved by an FFT approach (see also \cite{Jansen_rohwedder_IP_convolution, bringmann2017subset_sum_near_linear}). 

Define the intervals 
\begin{align*}
    I^{(j)} = ((j-1)a_{n}, j a_{n}]
\end{align*} 
and let $A^{(j)}$ be the set of all feasible target values within the interval $I^{(j)}$, i.e.
\begin{align*}
    A^{(j)} = \{t \in I^{(j)} \mid \text{where (\ref{IP1}) is feasible} \}.
\end{align*}
It is easy to see that
\begin{align*}
    |A^{(1)}| < \ldots < |A^{(f)}| < |A^{(f+1)}| = n
\end{align*}
holds, where $f$ is the largest number such that $|A^{(f)}| <n$.
Each target value in the interval $I^{(f+1)}$ is feasible, which implies that each target value in $|A^{(k)}|$ for $k>f$ has to be feasible as well.

In the following lemma we state a central property of the feasibility sets $A^{(i)}$ that will be essential in the following algorithm as it allows to compute feasibility sets from other feasibility sets with smaller indices. A similar recursive argument was used in \cite{Jansen_rohwedder_IP_convolution}.
\begin{lemma}\label{lem:feasibility_sets_addition}
    For every $i,j \geq 1$ the following equation holds:
    \begin{align*}
        A^{(i+j)} = (A^{(i)} + A^{(j)} + S) \cap I^{(i+j)}
    \end{align*}
\end{lemma}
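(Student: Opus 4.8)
The plan is to prove the set equality by establishing the two inclusions. The inclusion $A^{(i+j)} \supseteq (A^{(i)}+A^{(j)}+S)\cap I^{(i+j)}$ is the easy direction: an element of the right-hand side has the form $u+v+s$ with $u\in A^{(i)}$, $v\in A^{(j)}$ feasible and $s\in S$ an item, hence it is a sum of three solutions of~(\ref{IP1}) and is again feasible; since it also lies in $I^{(i+j)}$ it belongs to $A^{(i+j)}$. I would also note here why the intersection with $I^{(i+j)}$ cannot be dropped: $A^{(i)}+A^{(j)}+S$ is contained in $((i+j-2)a_n,(i+j+1)a_n]$, so it overlaps the neighbouring intervals $I^{(i+j-1)}$ and $I^{(i+j+1)}$ as well.

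The substance is the reverse inclusion. Given $t\in A^{(i+j)}$, i.e.\ $t$ feasible with $(i+j-1)a_n<t\le(i+j)a_n$, I would fix a solution of~(\ref{IP1}) and list the items it uses, with multiplicity, in an arbitrary order as $b_1,\dots,b_m\in S$, together with the partial sums $P_0=0<P_1<\dots<P_m=t$; note $0<b_k=P_k-P_{k-1}\le a_n$. Let $\ell$ be the largest index with $P_\ell\le ia_n$. Using $i\ge1$ one checks $\ell\ge1$, and using $j\ge1$ one checks $\ell\le m-1$; by maximality $P_{\ell+1}>ia_n$, hence $P_\ell>ia_n-b_{\ell+1}\ge(i-1)a_n$, so $P_\ell\in((i-1)a_n,ia_n]=I^{(i)}$. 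Since $P_\ell$ is a sum of items it is feasible, so $t':=P_\ell\in A^{(i)}$.

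Next I would analyse the remainder $r:=t-t'=\sum_{k>\ell}b_k$, which is again feasible. From $P_\ell\le ia_n$ one gets $r>(j-1)a_n$, and from $P_\ell>ia_n-b_{\ell+1}$ one gets the sharper bound $r<ja_n+b_{\ell+1}$. If $r\le ja_n$, then $r\in I^{(j)}$ is feasible, so $t'':=r\in A^{(j)}$ and $t=t'+t''$ already lies in $A^{(i)}+A^{(j)}$ (no extra item is needed; taking the extra summand to be $0$ places it in $A^{(i)}+A^{(j)}+S$). If instead $r>ja_n$, then $ja_n<r<ja_n+b_{\ell+1}$, so removing the single straddling item $s:=b_{\ell+1}\in S$ yields $t'':=r-s$ with $(j-1)a_n<r-b_{\ell+1}<ja_n$, i.e.\ $t''\in I^{(j)}$; moreover $t''=\sum_{k>\ell+1}b_k$ is a nonempty, hence feasible, sum, because an empty sum would force $r=b_{\ell+1}\le a_n\le ja_n$, contradicting $r>ja_n$. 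Thus $t''\in A^{(j)}$ and $t=t'+t''+s$ with $s\in S$; since $t\in I^{(i+j)}$ by hypothesis, this finishes the inclusion.

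I expect the main obstacle to be precisely this last bookkeeping: choosing the cut point $\ell$ so that the left block lands in the correct interval $I^{(i)}$ while simultaneously controlling how far the right block can overshoot $ja_n$. The point that makes it go through is that the overshoot is bounded by one item's size ($r<ja_n+b_{\ell+1}$, coming from $P_\ell>ia_n-b_{\ell+1}$), so deleting that single straddling item always pulls the right block back into $I^{(j)}$ — which is exactly the role of the extra summand $s\in S$. The only other thing to watch is the handful of degenerate configurations ($\ell=1$, $\ell=m-1$, the right block empty); these are ruled out or absorbed by the inequalities $1\le\ell\le m-1$ and ``$r>ja_n\Rightarrow\ell+1<m$'' established along the way.
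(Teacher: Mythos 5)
Your overall strategy --- ordering the items of one fixed solution, cutting at the last partial sum $P_\ell \le ia_n$, and using the straddling item $b_{\ell+1}$ as the extra summand from $S$ --- is exactly the splitting argument the paper has in mind, and your interval bookkeeping ($P_\ell \in I^{(i)}$, $r > (j-1)a_n$, $r < ja_n + b_{\ell+1}$, nonemptiness of both blocks via $1 \le \ell \le m-1$) is correct and in fact more careful than the paper's own proof, which only argues the inclusion $A^{(i+j)} \subseteq (A^{(i)}+A^{(j)}+S)\cap I^{(i+j)}$ and asserts the existence of the split without verifying the lower interval bounds.

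There is, however, a genuine gap in your Case 1 ($r \le ja_n$): you write $t = t'+t''$ and then claim that ``taking the extra summand to be $0$'' places $t$ in $A^{(i)}+A^{(j)}+S$. But $0 \notin S$: the set $S=\{a_0,\dots,a_n\}$ consists of the positive item sizes, so $A^{(i)}+A^{(j)}+S$ does not contain $t'+t''$. This cannot be patched, because the case is real and the lemma as stated is false. Take $S=\{2,3\}$ and $i=j=1$: then $A^{(1)}=\{2,3\}$ and $A^{(2)}=\{4,5,6\}$, while $\bigl(A^{(1)}+A^{(1)}+S\bigr)\cap(3,6]=\{6\}$, so $4$ and $5$ are missing from the right-hand side. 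The statement your argument actually proves, verbatim, is
\begin{align*}
A^{(i+j)} = \bigl(A^{(i)}+A^{(j)}+(S\cup\{0\})\bigr)\cap I^{(i+j)},
\end{align*}
with Case 1 using the summand $0$ and Case 2 using $b_{\ell+1}\in S$; the same correction is needed wherever the algorithms invoke the lemma. So: right approach, one false step, and that false step sits exactly on top of an error in the statement itself --- your more careful bookkeeping is what exposes it.
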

\begin{proof}
    Let $t \in A^{(i+j)}$ be a feasible target value. By definition, there exists a solution $x \in \ZZ^n$ such that $\sum_{i=0}^{n} x_i a_i = t \leq (i+j)a_n$.
    There exist subvectors $x', x'' \leq x$ such that $x'+x'' + e_k = x$ for some unit vector $e_k$ and 
    \begin{align*}
    t' := \sum_{l=0}^{n} x'_{l} a_l <&~ i a_n,\\
    t'' := \sum_{k=l}^{n} x''_{l} a_l <&~ j a_n,\\
   t' + t'' + a_k = &~  t,
    \end{align*}
    holds. Hence there exists a split of this sum into $t' + t'' + a_k = t$ for some $a_k \in S$, $t' \in A^{(i)}$ and $t'' \in A^{(j)}$.
\end{proof}

\subsection{Solving \subsetsum} \label{sec:alg_subsetsum_an}
Based on Lemma~\ref{lem:feasibility_sets_addition}, we can now compute the feasible target values in the relevant interval $I^{(k)}$, with $t \in I^{(k)}$ using a binary exponentiation argument and therefore find a solution to \subsetsum.
\begin{algorithm} \caption{Solving \subsetsum} \label{alg:subsetsum}
\begin{enumerate}
    \item Use the algorithm of Bringmann~\cite{bringmann2017subset_sum_near_linear} to compute $A^{(1)}$.
    \item Compute sets $A^{(2^i)}$ iteratively by
    \begin{align*}
        A^{(2^i)} = \left(A^{(2^{i-1})} + A^{(2^{i-1})} + S \right) \cap I^{(2^i)}
    \end{align*}
    for all $i \leq R$, where $R$ is defined by the smallest index with $|A^{(2^R)}| = a_{n}$ or $2^R \geq k$. If $|A^{(2^R)}| = a_{n}$ and $2^R \leq k$ then return that $t$ is feasible.
    \item Starting with $A^{(1)}$, compute $A^{(k)}$ iteratively by
        \begin{align*}
            {A^{(\ell+2^i)} = (A^{(\ell)} + A^{(2^i)} + S) \cap I^{(\ell+2^i)}}
        \end{align*}
    for every index $i \in Bin(k)$, where $Bin(k)$ is the set of indices with a $1$ in the binary encoding of $k$, i.e. $\sum_{i \in Bin(k)} 2^i = k$.
    \item If $t \in A^{(k)}$ return that $t$ is feasible; otherwise return that $t$ is not a feasible target value.
\end{enumerate}
\end{algorithm}

\begin{theorem} \label{thm:subsetsum_runtime}
    \subsetsum~can be solved in time
    \begin{align*}
        O \Big(a_n \log a_n \cdot \log \big(\min \{\frac{t}{a_n} ; \frac{F}{a_n} \} \big) \Big) = O(a_n \log a_n \cdot \log (\frac{a_n}{n})).
    \end{align*}
\end{theorem}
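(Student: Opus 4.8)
The plan is to bound the running time of Algorithm~\ref{alg:subsetsum} by controlling (i) the cost of each of its sumset operations and (ii) the number of such operations performed. For (i), every set $A^{(j)}$ produced by the algorithm lies inside the interval $I^{(j)}$ of length $a_n$, and $S\subseteq(0,a_n]$; hence every update of the form $(A^{(\ell)}+A^{(\ell')}+S)\cap I$ in Steps 2 and 3 is obtained by two sumsets of sets sitting in intervals of length $O(a_n)$, which costs $O(a_n\log a_n)$ via the FFT-based sumset routine. Step 1 runs Bringmann's algorithm to compute all feasible targets in $(0,a_n]$; since every item is at most $a_n$, this also costs $O(a_n\log a_n)$, and Step 4 is a single membership test in $O(a_n)$ time. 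Correctness of each individual update is immediate from Lemma~\ref{lem:feasibility_sets_addition}, which gives $A^{(2^i)}=(A^{(2^{i-1})}+A^{(2^{i-1})}+S)\cap I^{(2^i)}$ and $A^{(\ell+2^i)}=(A^{(\ell)}+A^{(2^i)}+S)\cap I^{(\ell+2^i)}$; in Step 3 one assembles $A^{(k)}$ by inserting the summands $2^i$, $i\in Bin(k)$, one at a time.

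The substance is bounding the number of sumset steps, i.e.\ the index $R$ of Step 2 together with $|Bin(k)|$ in Step 3, where $k=\lceil t/a_n\rceil$ is the index with $t\in I^{(k)}$. Since $|A^{(2^i)}|=a_n$ holds exactly when $2^i\ge f+1$ (the non-decreasing sequence $|A^{(1)}|\le|A^{(2)}|\le\cdots$ reaches $a_n$ precisely from index $f+1$ on), the stopping rule of Step 2 gives $R=\min\{\lceil\log_2 k\rceil,\,\lceil\log_2(f+1)\rceil\}$. Moreover $f$ is the largest index whose interval $I^{(f)}$ still contains an infeasible target, so $(f-1)a_n<F$ and thus $f\le F/a_n+1$. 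I would then split into two cases according to which condition of Step 2 is reached first. If $\lceil\log_2(f+1)\rceil<\lceil\log_2 k\rceil$, then $2^R=2^{\lceil\log_2(f+1)\rceil}\le 2^{\lceil\log_2 k\rceil-1}<k$, so the algorithm returns ``feasible'' inside Step 2 after only $R=O(\log(F/a_n))$ updates; its correctness here uses the propagation fact that once $|A^{(j)}|=a_n$ every later interval is fully feasible (given $t'\in I^{(j')}$, repeatedly subtract $a_0\le a_n$ until the value lands in the fully feasible $I^{(j)}$), so every $t\in I^{(k)}$ is indeed feasible. Otherwise $R=\lceil\log_2 k\rceil\le\lceil\log_2(f+1)\rceil$, so Step 2 uses $O(\log\min\{t/a_n,F/a_n\})$ updates and Step 3 performs a further $|Bin(k)|\le\log_2 k+1$ updates (all needed sets $A^{(2^i)}$, $i\in Bin(k)$, were produced in Step 2 since $2^i\le k\le 2^R$). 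In every case the total number of sumset operations is $O\big(1+\log\min\{t/a_n,F/a_n\}\big)$.

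Multiplying by the $O(a_n\log a_n)$ per-operation cost and adding Steps 1 and 4 yields the first displayed bound. For the second equality I would invoke the Erd\H{o}s--Graham bound $F\le\frac{2a_{n-1}a_n}{n+1}-a_n=O(a_n^2/n)$ quoted in the introduction, so that $\min\{t/a_n,F/a_n\}\le F/a_n=O(a_n/n)$ and hence $\log\min\{t/a_n,F/a_n\}=O(\log(a_n/n))$. I expect the only delicate point to be the case analysis pinning down $R$, and in particular checking that the ``return feasible'' branch of Step 2 is entered exactly when $2^R<k$ and is then correct; everything else is routine bookkeeping of FFT-based sumsets.
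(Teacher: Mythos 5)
Your proposal is correct and follows essentially the same route as the paper: per-sumset cost $O(a_n\log a_n)$ from the FFT-based routine on intervals of length $a_n$, correctness of each update from Lemma~\ref{lem:feasibility_sets_addition}, an $O(\log\min\{t/a_n,F/a_n\})$ bound on the number of sumset operations via the stopping index $R$, and the Erd\H{o}s--Graham bound for the second form. Your treatment is in fact somewhat more careful than the paper's (pinning down $R$ exactly and justifying the early-return branch via the propagation of full feasibility), but the argument is the same.
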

\begin{proof}
    \textbf{Correctness}\\
To show the correctness of Algorithm~\ref{alg:subsetsum} one has to prove that the sets $A^{(i)}$ are computed correctly, i.e. a target value $t \in I^{(j)}$ is feasible if and only if $t \in A^{(j)}$.

The algorithm of Bringmann~\cite{bringmann2017subset_sum_near_linear} that is used in step (1) computes all feasible target values $t' \leq t$ for a given $t$ in time $O(t \log t)$. By setting $t= a_n$, we obtain the feasibility set $A^{(1)}$.

The correctness of the other feasibility sets $A^{(i)}$ for $i>1$ that are computed in step (2) and (3) of the algorithm follows directly by Lemma~\ref{lem:feasibility_sets_addition}.

\textbf{Running Time}\\
The running time of the algorithm depends on how many sumset computation are performed. As each sumset is computed for intervals $I^{(i)}$ of the same length $a_n$, the running time for each sumset computation is bounded by $O(a_n \log a_n)$.

In step (2) and (3) of the algorithm a total of $O(R) = O(\min \{ \log k, \log \frac{F}{a_n} \}) = O(\min \{ \log \frac{t}{a_n}, \log \frac{F}{a_n} \}) =  O(\frac{F}{a_n})$ are being performed. Summing up, we obtain a running time of 
\begin{align*}
    O \Big(a_n \log a_n \cdot \log \big(\min \{\frac{t}{a_n} ; \frac{F}{a_n} \} \big) \Big)
\end{align*}
for the algorithm.
Using that $F$ is bounded by $O(\frac{a_{n}^2}{n})$ \cite{erdos1972frobenius_bound}, we obtain a running time of $O(a_n \log a_n \cdot \log (\frac{a_n}{n}))$.
\end{proof}

\subsection{Solving the \frobenius~Problem} \label{sec:alg_frobenius}
Having Algorithm~\ref{alg:subsetsum} at hand, we can easily solve the \frobenius~problem. We use a binary search to determine the largest index $f$ with $|A^{(f)}| < n$ by applying Algorithm~\ref{alg:subsetsum} repeatedly to determine the sets $A^{(i)}$ and their respective cardinality. Recall that $|A^{(1)}| \leq |A^{(2)}| \leq \ldots$ and hence $|A^{(f+1)}| = n$ implies that all target values $t > f a_n$ are feasible. Therefore, the largest infeasible target value (i.e. the Frobenius number) has to be contained in $A^{(f)}$.
\begin{algorithm} \caption{Solving \frobenius} \label{alg:frobenius}
\begin{enumerate}
    \item Compute the sets $A^{(2^0)}, A^{(2^1)}, \ldots , A^{(2^R)}$ as in steps (1)--(3) of Algorithm~\ref{alg:subsetsum}, where $R$ is the smallest number with $|A^{(2^R)}| = n$.
    \item Do a binary search in the interval $[2^{R-1},2^R)$ to determine the largest index $f \in [2^{R-1},2^R)$ with $|A^{(f)}| < n$.
    \item Return the Frobenius number $F = f \cdot a_n + r$, where $r = \max \big\{(I^{(f)} \setminus A^{(f)}) \big\}$.
\end{enumerate}
\end{algorithm}
The naive approach of using Algorithm~\ref{alg:subsetsum} to determine the sets $A^{(i)}$ leads to a running time of $O(a_n \log^3 a_n)$ as the algorithm has to be called at most $\log 2^{R-1} = O(\log a_n)$ many times. Surprisingly, this running time can be improved by computing the sets $A^{(i)}$ that are required in the binary search in a more efficient way.
The following algorithm computes $f$ without repeatedly using Algorithm~\ref{alg:subsetsum} and instead relies on the precomputed feasibility sets $A^{(2^i)}$. This improves the running time of the algorithm by a logarithmic factor.
\begin{algorithm}\caption{Efficient Binary Search}
\label{alg:eff_bin_search}
\begin{algorithmic}
    \State{$L := 2^{R-1}$} \Comment{Set $L$ as the left index of the interval of the binary search}
    \For{$i=R-2$ \textbf{downto} $0$}{
        \State{$A^{(L+2^{i})} = (A^{(L)} + A^{(2^i)} + S) \cap I^{(L+2^{i})}$}
        \Comment{The middle element $L+2^i$ of the interval}
        \If {$|A^{(L+2^{i})}| < a_n$} 
            \State $L := L + 2^i$
        \EndIf
    }
    \EndFor
    \State \textbf{return }$L$
\end{algorithmic}
\end{algorithm}
Note that Algorithm~\ref{alg:eff_bin_search} simulates a classical binary search with $L$ being the leftmost index of the binary search interval and $L+2^{R-1-i}$ being the rightmost index of the binary search interval after $i$ iterations. Depending on the cardinality of the middle element $A^{(L + 2^{R-2-i})}$ the binary search continues in the left interval $[L,L + 2^{R-2-i}]$ or in the right interval $[L+2^{R-2-i}, L+2^{R-1-i}]$.

Using this algorithm in step~(2) of Algorithm~\ref{alg:frobenius} leads to the following statement.
\begin{theorem}
    The \frobenius~problem can be solved in time
    \begin{align*}
        O(a_n \log a_n \cdot \log \frac{F}{a_n}) = O(a_n \log a_n \cdot \log \frac{a_n}{n}).
    \end{align*}
\end{theorem}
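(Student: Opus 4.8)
The plan is to show that Algorithm~\ref{alg:frobenius}, with the efficient binary search of Algorithm~\ref{alg:eff_bin_search} plugged into its second step, is correct and runs within the claimed bound; the two external ingredients are Lemma~\ref{lem:feasibility_sets_addition} and the Erd\H{o}s--Graham bound $F = O(a_n^2/n)$ from~\cite{erdos1972frobenius_bound}. First I would pin down the shape of the sequence $(|A^{(j)}|)_{j\ge 1}$. The shift $t \mapsto t + a_n$ is a feasibility-preserving bijection of $I^{(j)}$ onto $I^{(j+1)}$, so $|A^{(j)}| \le |A^{(j+1)}|$; and since $F$ is finite, every interval $I^{(j)}$ with $(j-1)a_n \ge F$ consists entirely of feasible targets, so $|A^{(j)}| = a_n$ for all large $j$. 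Hence the index $f$, the largest one with $|A^{(f)}| < a_n$, is well defined and finite, and $|A^{(j)}| < a_n$ holds exactly for $j \le f$. If $|A^{(f)}| < a_n$ then $I^{(f)}$ contains an infeasible target, which is $> (f-1)a_n$ and $\le F$, so $f < F/a_n + 1$; conversely the interval containing $F$ has a non-full feasibility set, so in fact $F \in I^{(f)}$ and $F = \max(I^{(f)} \setminus A^{(f)})$, i.e.\ exactly the value read off in step~(3) of Algorithm~\ref{alg:frobenius}. The same estimate applied to $2^{R-1}$, which satisfies $|A^{(2^{R-1})}| < a_n$ by the choice of $R$, gives $2^{R-1} < F/a_n + 1$ and hence $R = O(\log(F/a_n))$; in particular $f \in [2^{R-1}, 2^R)$.

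Next I would verify that Algorithm~\ref{alg:eff_bin_search} returns $f$, via the loop invariant: before the iteration for index $i$ we have $L \le f < L + 2^{i+1}$, $|A^{(L)}| < a_n$, and the set $A^{(L)}$ is already at hand. Step~(1) of Algorithm~\ref{alg:frobenius} produces $A^{(2^0)}, \dots, A^{(2^R)}$ correctly by iterated use of Lemma~\ref{lem:feasibility_sets_addition}, bootstrapped from $A^{(1)}$ computed by Bringmann's algorithm~\cite{bringmann2017subset_sum_near_linear} on target $a_n$; this in particular makes $A^{(2^{R-1})}$ available and establishes the invariant for $i = R-2$. In the loop body, $A^{(L+2^i)} = (A^{(L)} + A^{(2^i)} + S) \cap I^{(L+2^i)}$ is correct by Lemma~\ref{lem:feasibility_sets_addition} applied to the pair $(L, 2^i)$, using the stored $A^{(L)}$ and the precomputed $A^{(2^i)}$; by the monotonicity established above, $|A^{(L+2^i)}| < a_n$ iff $L + 2^i \le f$, so updating $L \gets L + 2^i$ exactly when this holds (and leaving $L$ unchanged otherwise) preserves the invariant, halves the search interval, and keeps the current $A^{(L)}$ available for the next round. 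After the iteration for $i = 0$ the interval has length one, so $L = f$. The degenerate cases $R \le 1$ (that is, $F$ very small) are handled directly.

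For the running time, note that every sumset computed in the algorithm involves sets contained in intervals of length $O(a_n)$, hence costs $O(a_n \log a_n)$ via the FFT-based sumset procedure. Step~(1) of Algorithm~\ref{alg:frobenius} performs $O(R)$ such operations plus one $O(a_n \log a_n)$ call to Bringmann's algorithm; Algorithm~\ref{alg:eff_bin_search} performs one sumset per iteration, i.e.\ $O(R)$ in all; and step~(3) scans $I^{(f)}$ in time $O(a_n)$. Summing and substituting $R = O(\log(F/a_n))$ gives $O(a_n \log a_n \cdot \log(F/a_n))$, and the Erd\H{o}s--Graham bound $F = O(a_n^2/n)$ yields the stated $O(a_n \log a_n \cdot \log(a_n/n))$.

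The step I expect to be the crux is the correctness of the efficient binary search: one must ensure that whenever Algorithm~\ref{alg:eff_bin_search} needs a feasibility set $A^{(L)}$ for an index $L$ that is not a power of two, that set has indeed already been produced --- either among the precomputed $A^{(2^i)}$ or as the $A^{(L+2^i)}$ from the previous accepting iteration. This is precisely what the loop invariant tracks, and it is the one place where the argument goes beyond directly quoting the analysis of Algorithm~\ref{alg:subsetsum}.
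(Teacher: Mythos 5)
Your proposal is correct and follows the same route as the paper: correctness of the precomputed sets $A^{(2^i)}$ and of the efficient binary search via Lemma~\ref{lem:feasibility_sets_addition}, then $O(R) = O(\log(F/a_n))$ sumset computations at $O(a_n\log a_n)$ each, finished off with the Erd\H{o}s--Graham bound. The only difference is that you spell out the loop invariant for Algorithm~\ref{alg:eff_bin_search} and the identification $F=\max\bigl(I^{(f)}\setminus A^{(f)}\bigr)$, which the paper leaves implicit (and you correctly use the threshold $a_n$ rather than the paper's occasional typo $n$).
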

\begin{proof}
    \textbf{Correctness}\\
    The correctness of the algorithm follows directly from the fact that the feasibility sets $A^{(i)}$ are computed correctly and that the binary search computes the largest index $f$ with $|A^{(f)}|<n$.
    
    \textbf{Running time}\\
    In step (1) of the algorithm, the feasibility sets $A^{(2^0)}, A^{(2^1)}, \ldots , A^{(2^R)}$ are computed which requires a running time of $O(a_n \log a_n \cdot \log (\frac{F}{a_n}))$ (see Theorem~\ref{thm:subsetsum_runtime}).
    
    In the binary search in Algorithm~\ref{alg:eff_bin_search}, a total of $O(R) = O(\log \frac{F}{a_n})$ sumset computations are performed which requires a running time of $O(a_n \log a_n \cdot \log \frac{F}{a_n})$.
    
    Step (3) of the algorithm requires only $O(a_n)$ time and hence the total running time of the algorithm can be bounded by
    \begin{align*}
        O(a_n \log a_n \cdot \log (\frac{F}{a_n})).
    \end{align*}
\end{proof}

\subsection{Solving \allvalues}
The key ingredient to efficiently solving \allvalues~is by a trade-off argument of two algorithms. The algorithm of Böcker and Lipt{\'a}k~\cite{bocker2007_frobenius_algorithm} performs well in the case that $n$ is small as it has a running time of $O(n a_0)$. The other algorithm computes the feasibility sets $A^{(i)}$ stepwise for $i \leq f$ (i.e. it computes all feasible target values $\leq F$). In the case that $n$ is large, the number of items in $A^{(1)}$ is large to begin with and we can use a bound on $F$ to bound the number of computed feasibility sets .
\begin{algorithm} \caption{Solving \allvalues}
\label{alg:allvalues_an}
\begin{enumerate}
    \item If $n < (a_n \log a_n)^{1/2}$ use the algorithm~\cite{bocker2007_frobenius_algorithm} to compute \allvalues. Otherwise proceed as follows:
    \item Use the algorithm of \cite{bringmann2017subset_sum_near_linear} to compute $A^{(1)}$.
    \item While $|A^{(i)}| < n$ do
    \begin{itemize}
        \item Set $a[i] = \min \Big\{ \{a[i] \} \cup \{  a \in A^{(i)} \mid a \equiv i \bmod a_0  \} \Big\}$
        \item Compute $A^{(i+1)}$ by\begin{align*}
        A^{(i+1)} = (A^{(i)} + S) \cap I^{(i+1)}.
        \end{align*}
        \item Set $i := i+1$.
    \end{itemize}
    \item Return sequence $a$.
\end{enumerate}
\end{algorithm}
We obtain the following statement regarding the running time of Algorithm~\ref{alg:allvalues_an}:
\begin{theorem}
    \allvalues~can be solved in time
    \begin{align*}
        O(a_{n}^{3/2} \log^{1/2}a_n).
    \end{align*}
\end{theorem}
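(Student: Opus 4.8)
The plan is to analyse the two branches of Algorithm~\ref{alg:allvalues_an} separately. The threshold $n = (a_n\log a_n)^{1/2}$ is chosen exactly so that the running times of the two branches coincide, so it suffices to show each branch meets the bound $O(a_n^{3/2}\log^{1/2} a_n)$.

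When $n < (a_n\log a_n)^{1/2}$ the algorithm simply runs the procedure of B\"ocker and Lipt{\'a}k~\cite{bocker2007_frobenius_algorithm}, which solves \allvalues~in time $O(na_0)$, and whose correctness is inherited directly. Since $a_0 \le a_n$ this is $O(na_n)$, and plugging in $n < (a_n\log a_n)^{1/2}$ gives $O(a_n^{3/2}\log^{1/2} a_n)$.

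For $n \ge (a_n\log a_n)^{1/2}$ I would argue correctness first. The set $A^{(1)}$ is produced by Bringmann's algorithm~\cite{bringmann2017subset_sum_near_linear}, which returns exactly the feasible target values $\le a_n$, and every further set $A^{(i+1)}$ is obtained from already-computed feasibility sets by a sumset restricted to $I^{(i+1)}$, which reproduces the recursive identity of Lemma~\ref{lem:feasibility_sets_addition} with the index shift $j=1$. It then has to be checked that the residue table $a$ is filled completely by the time the loop halts: the cardinalities $|A^{(1)}|\le|A^{(2)}|\le\cdots$ are non-decreasing and attain their maximum $a_n$ precisely at the index $f+1$ of the first interval that is entirely feasible, and the minimum feasible representative of any residue class $r\bmod a_0$ is at most $F+a_0 \le f a_n + a_0 < (f+1)a_n$, hence lies in $\bigcup_{i\le f+1}I^{(i)}$; consequently it suffices to process the feasibility sets through $A^{(f+1)}$, which the loop does.

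For the running time of this branch, computing $A^{(1)}$ costs $O(a_n\log a_n)$, and every subsequent iteration performs a constant number of sumsets of subsets of length-$a_n$ intervals, each costing $O(a_n\log a_n)$ by a reduction to boolean convolution, plus $O(a_n)$ bookkeeping. The number of iterations is $O(f)$, and here the Erd\H{o}s--Graham inequality is the crucial ingredient: since $I^{(f)}$ contains an infeasible value, $F > (f-1)a_n$, so $f < F/a_n + 1 = O(a_n/n)$ using $F = O(a_n^2/n)$~\cite{erdos1972frobenius_bound} together with $n \le a_n$. Hence this branch runs in $O\big(\tfrac{a_n}{n}\cdot a_n\log a_n\big) = O\big(\tfrac{a_n^2\log a_n}{n}\big)$, which for $n \ge (a_n\log a_n)^{1/2}$ is $O(a_n^{3/2}\log^{1/2} a_n)$; combining the two branches proves the theorem. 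The part I expect to be most delicate is the correctness bookkeeping in this branch: confirming that the interval-wise sumset recursion really recovers Lemma~\ref{lem:feasibility_sets_addition} after the index shift, and that halting the loop as soon as one interval becomes fully feasible does not omit the minimum representative of some residue class -- which is exactly what the inequality $F+a_0<(f+1)a_n$ rules out. The running-time estimate itself is routine once $f = O(a_n/n)$ is in hand, and the only genuine design decision is the threshold, which is forced by balancing $na_n$ against $a_n^2\log a_n/n$.
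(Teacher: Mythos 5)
Your proposal is correct and follows essentially the same route as the paper: split on the threshold $n \lessgtr (a_n\log a_n)^{1/2}$, charge the small-$n$ branch to the $O(na_0)$ algorithm of B\"ocker and Lipt\'ak, and bound the number of sumset iterations in the large-$n$ branch by $f = O(F/a_n) = O(a_n/n)$ via the Erd\H{o}s--Graham bound. Your correctness bookkeeping is in fact slightly more careful than the paper's terse argument -- in particular the observation that a residue class's minimal representative can lie in $I^{(f+1)}$ (it is exactly $F+a_0$ for the class of $F$), so that set must also be harvested into the table.
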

\begin{proof}
    \textbf{Case: $n < (a_n \log a_n)^{1/2}$}\\
    In this case the algorithm of Böcker and Lipt{\'a}k~\cite{bocker2007_frobenius_algorithm} is used with a running time of 
    \begin{align*}
        O(n a_0) = O(a_{n}^{3/2} \log^{1/2}a_n).
    \end{align*}
    \textbf{Case: $n \geq (a_n \log a_n)^{1/2}$}\\ 
    In this case the sets $A^{(1)}, \ldots , A^{(1)}, A^{(f+1)}$ are computed with $|A^{(f)}| < n$ and $|A^{(f+1)}| = n$, which requires a running time of $O(f \cdot a_n \log a_n)$. By definition we know that $F \in A^{(f)}$ and hence $f \leq F / a_n$. Using the bound by Erdös and Graham~\cite{erdos1972frobenius_bound} yields
    \begin{align*}
        O(f \cdot a_n \log a_n) = O(\frac{F}{a_n} \cdot a_n \log a_n) = O(\frac{a_{n}^2 \log a_n}{n} ) = O(a_{n}^{3/2} \log^{1/2}a_n).
    \end{align*}
\end{proof}

\subsection{Hardness of Parameterization by $a_n$}
In a recent result by Abboud et al.~\cite{abboud_bringmann_hardness_subset_sum_bringmann2019} it was shown that under the SETH there exists no algorithm for \textsc{bounded subset sum} with a sublinear running time in $t$. Even more, the theorem also excludes the existence of an algorithm with a running time for example of the form $t^{1-\epsilon} \cdot 2^{\sqrt{n}}$. In their reduction they are not making use of the upper bounds of the variables which implies that the same hardness also holds for \subsetsum. This hardness statement for \subsetsum~was made precise in~\cite{Jansen_rohwedder_IP_convolution}.
\begin{theorem}[\cite{abboud_bringmann_hardness_subset_sum_bringmann2019},\cite{Jansen_rohwedder_IP_convolution}] \label{thm:hardness_subsetsum_an}
    Assuming SETH, then for any $\epsilon > 0$ there exists a $\delta > 0$ such that there exists no algorithm for \subsetsum~with a running time of 
    \begin{align*}
        O(t^{1-\epsilon}\cdot 2^{\delta n}).
    \end{align*}
\end{theorem}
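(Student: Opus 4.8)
The plan is to prove this by a fine-grained reduction from $k$-SAT, following Abboud et al.~\cite{abboud_bringmann_hardness_subset_sum_bringmann2019}; the formulation for \subsetsum\ rather than for bounded subset sum is the one made precise by Jansen and Rohwedder~\cite{Jansen_rohwedder_IP_convolution}, and it follows once one checks that the hard instances never rely on a multiplicity bound. Starting from a $k$-SAT formula $\phi$ on $N$ variables, I would first use the Sparsification Lemma to reduce to the case of $M \le c N$ clauses for a constant $c = c(k,\epsilon)$ (paying only a negligible $2^{o(N)}$ overhead), and then build in time $\mathrm{poly}(N)$ a \subsetsum\ instance with items $a_0 < \dots < a_n$ and target $t$ such that: (i) $n = O(N)$, with a constant depending on $k$ and $\epsilon$; (ii) $\log_2 t \le (1+\tfrac{\epsilon}{2})N$; and (iii) the instance is feasible if and only if $\phi$ is satisfiable. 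Granting such a reduction, a \subsetsum\ algorithm running in time $O(t^{1-\epsilon}2^{\delta n})$ would decide $\phi$ in time $2^{(1-\epsilon)(1+\epsilon/2)N}\cdot 2^{O(\delta)N}\cdot\mathrm{poly}(N) = 2^{(1-\epsilon/2-\epsilon^2/2+O(\delta))N}$, the hidden constant in $O(\delta)$ depending on $k,\epsilon$. Given $\epsilon$, one fixes via SETH a $k = k(\epsilon)$ admitting no $2^{(1-\epsilon/3)N}$ algorithm and then a $\delta = \delta(\epsilon) > 0$ small enough that the exponent above stays below $1-\epsilon/3$; together with the reduction this contradicts SETH. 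The same instances, with $n = \Theta(N)$ and $\log_2 t = \Theta(N)$, also rule out time $t^{1-\epsilon}2^{o(n)}$, in particular $t^{1-\epsilon}2^{\sqrt n}$ as noted in the introduction.

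For the reduction I would use the classical digit template in a compressed form. Partition the variables into $r = N/s$ blocks of a large constant size $s = s(\epsilon)$; for each block and each of its $2^s$ local assignments, create one item carrying a single $1$ in the block's digit of a high-order \emph{selector} region (over a fixed base $B$), so that any subset summing to $t$ must pick exactly one item per block and therefore encodes a global truth assignment. Because $s$ is a large constant, the selector region occupies only $(N/s)\cdot O(\log B) \le \tfrac{\epsilon}{4}N$ bits, at the price of only a (large but constant) blow-up in the item count, $n = (N/s)2^s + O(M) = O(N)$. In a low-order \emph{clause} region, each item records which clauses its local assignment satisfies, and for each clause one adds $O(1)$ filler items so that its digits can be padded up to the exact value prescribed by $t$ precisely when the clause has at least one satisfied literal, turning ``$\ge 1$ literal true'' into an equality.

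The main obstacle is property (ii), specifically the size of the clause region: a naive encoding uses one digit per clause in a base $\ge k+1$ (needed so that the up to $k$ satisfied literals of a clause do not carry), hence $\Theta(M\log k)$ bits — a constant that is not merely bounded away from $1$ but grows with $k$ — and such a reduction only excludes running time $2^{o(N)}$, which is worthless for $t^{1-\epsilon}$ with small $\epsilon$. Fitting the clause-checking information into $\le \tfrac{\epsilon}{4}N$ bits, i.e.\ into well under one bit per clause on average (since $M$ may be much larger than $N$), is exactly the delicate, encoding-heavy core of~\cite{abboud_bringmann_hardness_subset_sum_bringmann2019}, and I would import that construction rather than reprove it. Finally, for the ``unbounded'' qualifier one checks, as in~\cite{Jansen_rohwedder_IP_convolution}, that the base $B$ is chosen large enough that no digit of $t$ can ever accommodate a second copy of any item; hence every feasible solution automatically uses each item at most once, arbitrary multiplicities create no new solutions, and the constructed instance is a legitimate \subsetsum\ instance, so the bound carries over verbatim.
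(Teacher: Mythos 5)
The paper does not actually prove this theorem: it is imported verbatim from the cited works, with only the remark that the reduction of Abboud et al.\ never exploits the multiplicity bounds on the variables and hence transfers to the unbounded setting. Your sketch is consistent with that treatment — the parameter bookkeeping (sparsification, choosing $k=k(\epsilon)$ before $\delta$, the exponent arithmetic, and the observation that a large digit base prevents any item from being used twice so that unboundedness is harmless) is correct — and you defer exactly the same technically hard ingredient, compressing the clause-checking information to $o(N)$ bits, to the citation, which is no more reliance on prior work than the paper itself exhibits.
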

Note that we can in general assume $t \geq a_n$ as items with a size $>t$ can be removed from the instance. This implies that the same hardness statement can be made for the parameter $a_n$ instead of $t$, which shows that Algorithm~\ref{alg:subsetsum} is (apart from logarithmic factors) nearly optimal under the assumption of the SETH.
In the following, we will show that we can derive the same hardness statement for \frobenius~which shows that Algorithm~\ref{alg:frobenius} is nearly optimal.

In a classical paper by Ramirez-Alfonsin~\cite{ramirez1996complexity_frobenius} it was shown that the \frobenius~problem is NP-hard. The author gave a computational reduction from \subsetsum~which 
showed that \subsetsum~can be solved for a given target value $t$ by solving only at most four instances of \frobenius. In those instances the following item sizes were used:
\begin{align} \label{eq:item_construction}
    &\bar{a}_i = 2 a_i \qquad \qquad \text{ for } 0 \leq i \leq n \notag \\ 
    &\bar{a}_{n+1} = 2 F(a_1, \ldots , a_n) + 1\\
    & \bar{a}_{n+2} = F(\bar{a}_0, \ldots \bar{a}_n,\bar{a}_{n+1})-2t \notag
\end{align}
The central theorem that is being used in the reduction is the following:
\begin{theorem}[Ramirez-Alfonsin~\cite{ramirez1996complexity_frobenius}] \label{thm:reduction_frobenius}
    For given $t \in \ZZgeq$ with $t \leq F(a_0, \ldots , a_n)$ there exists a solution of (\ref{IP1}) if and only if 
    \begin{align*}
        F(\bar{a}_0, \ldots \bar{a}_{n},\bar{a}_{n+1}, \bar{a}_{n+2} ) < F(\bar{a}_0, \ldots \bar{a}_{n},\bar{a}_{n+1}).
    \end{align*}
\end{theorem}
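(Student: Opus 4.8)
The plan is to funnel both implications through one preliminary fact about the intermediate system $\{\bar a_0,\dots,\bar a_{n+1}\}$. Write $F$ for the Frobenius number of the original item set $\{a_0,\dots,a_n\}$, so that $\bar a_{n+1}=2F+1$, and set $G:=F(\bar a_0,\dots,\bar a_{n+1})$. Every system that appears has $\gcd 1$ — the $\bar a_0,\dots,\bar a_n$ are even with $\gcd$ equal to $2$, while $\bar a_{n+1}$ is odd — so all Frobenius numbers are finite. The first and main step is to show $G=4F+1$; in fact only the lower bound $G\ge 4F+1$ is used below, and this amounts to checking that $4F+1$ is \emph{not} representable by $\{\bar a_0,\dots,\bar a_{n+1}\}$: if $4F+1=2\sum_{i=0}^n a_i x_i+k(2F+1)$ with $x_i,k\ge 0$, then $k$ is odd by parity, $k\ge 3$ is impossible since $3(2F+1)>4F+1$, and $k=1$ forces $F=\sum_i a_i x_i$, contradicting non-representability of $F$ by $\{a_0,\dots,a_n\}$. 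The matching upper bound comes from the same parity split: any even $m>2F$ has $m/2>F$ representable, and any odd $m>4F+1$ has $(m-\bar a_{n+1})/2>F$ representable, so all such $m$ are representable; here one uses crucially that $\bar a_{n+1}=2F+1>F$, which prevents a second subtraction of $\bar a_{n+1}$ from ever being useful. This parity bookkeeping is the step I expect to cost the most care, although it is entirely elementary; everything after it is short. Finally, since $t\le F$ by hypothesis, $\bar a_{n+2}=G-2t=4F+1-2t\ge 2F+1>0$, so $\bar a_{n+2}$ is a genuine positive item and the statement is well posed.

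The forward implication is then immediate. If $(\ref{IP1})$ has a solution $x\in\ZZgeq^{n+1}$ with $\sum_{i=0}^n a_i x_i=t$, then $\sum_{i=0}^n\bar a_i x_i=2t$, so $G=\bar a_{n+2}+2t=\bar a_{n+2}+\sum_{i=0}^n\bar a_i x_i$ is representable by $\{\bar a_0,\dots,\bar a_{n+2}\}$. Adjoining an item never increases the Frobenius number — every integer larger than $G$ is already representable without $\bar a_{n+2}$ — so $F(\bar a_0,\dots,\bar a_{n+2})\le G$; and since $G$ itself is now representable, the inequality is strict. Hence $F(\bar a_0,\dots,\bar a_{n+2})<G=F(\bar a_0,\dots,\bar a_{n+1})$.

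For the converse I would argue contrapositively: assuming $(\ref{IP1})$ has no solution, I show $G$ remains non-representable by $\{\bar a_0,\dots,\bar a_{n+2}\}$, which forces $F(\bar a_0,\dots,\bar a_{n+2})=G$ and so the strict inequality fails. Suppose for contradiction $G=\sum_{i=0}^n\bar a_i x_i+x_{n+1}\bar a_{n+1}+x_{n+2}\bar a_{n+2}$ with all $x_i\ge 0$. Since $G$ is not representable without $\bar a_{n+2}$, we must have $x_{n+2}\ge 1$; and $x_{n+2}\ge 2$ would give $G\ge 2\bar a_{n+2}=2G-4t$, i.e. $4t\ge G=4F+1$, hence $t\ge F+1$, contradicting $t\le F$. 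So $x_{n+2}=1$, and subtracting $\bar a_{n+2}$ yields $2t=2\sum_{i=0}^n a_i x_i+x_{n+1}(2F+1)$; comparing parities forces $x_{n+1}$ to be even, say $x_{n+1}=2y$, so $t=\sum_{i=0}^n a_i x_i+y(2F+1)$. If $y\ge 1$ then $t\ge 2F+1>F$, impossible; so $y=0$ and $t=\sum_{i=0}^n a_i x_i$ exhibits a solution of $(\ref{IP1})$, the desired contradiction. Therefore $G$ is non-representable by the full system, and the theorem follows. (The degenerate case in which $1$ is an item, so $F$ is not a positive integer, should be dealt with separately, but there \subsetsum{} is trivial.)
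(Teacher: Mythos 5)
The paper does not prove this theorem; it imports it verbatim from Ramirez-Alfonsin's 1996 paper, so there is no in-paper argument to compare against. Your reconstruction is, as far as I can check, correct and complete. The load-bearing step is your computation $F(\bar a_0,\dots,\bar a_{n+1})=4F+1$: the lower bound via the parity/size analysis of $k$ and the upper bound via the even/odd split both check out, and this value is consistent with (indeed more precise than) the relation the paper later quotes from the same source. The forward direction (representability of $G$ plus monotonicity of the Frobenius number under adjoining items) and the contrapositive of the converse (forcing $x_{n+2}=1$, then $x_{n+1}=0$ by parity and the bound $2F+1>F$, hence a representation of $t$) are both sound, and you correctly flag the degenerate case $a_0=1$ where $F$ is not a positive integer. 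One small discrepancy worth noting: the paper's equation (5) writes $\bar a_{n+1}=2F(a_1,\dots,a_n)+1$, with the index starting at $a_1$ rather than $a_0$; your reading $\bar a_{n+1}=2F(a_0,\dots,a_n)+1$ is the one that makes the construction work (and matches the original source up to index conventions), so the paper's indexing there is best regarded as a typo rather than a gap in your argument.
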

Based on the above theorem one can solve \subsetsum~by first computing whether $t > F(a_1, \ldots , a_n)$, in which case $t$ is feasible and in the case that $t \leq F(a_1, \ldots , a_n)$ the algorithm computes $F(\bar{a}_0, \ldots \bar{a}_{n},\bar{a}_{n+1}, \bar{a}_{n+2} )$ and $F(\bar{a}_0, \ldots \bar{a}_{n},\bar{a}_{n+1})$ in order to decide by Theorem~\ref{thm:reduction_frobenius} if $t$ is feasible.

The reason that we can not directly use this reduction in combination with Theorem~\ref{thm:hardness_subsetsum_an} is that the item sizes defined in~(\ref{eq:item_construction}) are potentially larger than $a_n$ or $t$. This is because $F(a_0, \ldots , a_n)$ can only be bounded by a quadratic term in $a_n$. In the following theorem we show how to avoid this quadratic dependency. By adding only $O(\log t)$ many additional items we can show that the Frobenius number of the new instance can be bounded by $O(t \log t)$.
\begin{theorem} \label{thm:hardness_frobenius_an}
    Assuming SETH, for any $\epsilon$ there exists a $\delta>0$ such that \frobenius~can not be solved in time
    \begin{align*}
        O(a_{n}^{1-\epsilon} \cdot 2^{\delta n}).
    \end{align*}
\end{theorem}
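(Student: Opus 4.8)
The plan is to combine Ramirez-Alfonsin's reduction (Theorem~\ref{thm:reduction_frobenius}) with the hardness statement of Theorem~\ref{thm:hardness_subsetsum_an}, but to first modify the \subsetsum~instance so that its Frobenius number is only $O(t\log t)$ rather than $\Theta(a_n^2)$. Starting from an arbitrary \subsetsum~instance with items $a_0<\dots<a_n$ and target $t$ (where we may assume $a_i\le t$ for all $i$), I would add the $O(\log t)$ extra items $2^0,2^1,\dots,2^{\lceil\log t\rceil}$ to the item set. This does not change feasibility of any target value $t'\le t$ in a way that matters: in fact it changes feasibility only for values $t'$ that were already large, and the original target $t$ is feasible in the augmented instance if and only if it was feasible in the original one --- here one has to be slightly careful and instead argue about a rescaled instance, multiplying all original items by $t+1$ and adding the powers of two $1,2,\dots,2^{\lceil\log t\rceil}$, together with a target $t'' = (t+1)\,t$; then any representation of $t''$ must use the powers of two to contribute exactly a multiple of $t+1$, and since their total is less than $2(t+1)$ the power-of-two part contributes either $0$ or exactly... no, that still does not force it. The cleaner route is: scale original items by $N:=t+1$, add powers of two up to $N$, and set the new target to $N\cdot t + r$ for the specific residue trick, but the simplest correct statement is that with the powers of two present every residue class modulo $1$ is trivially hit, so every integer $\ge$ some small bound is representable, giving $F(\text{augmented}) = O(N) = O(t)$ for the part not involving the scaled items, and then a short argument shows feasibility of $t$ in the original instance is equivalent to feasibility of $N t$ in the augmented one.

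More concretely, I would set up the augmented instance as follows: let $b_i = (t+1)a_i$ for $0\le i\le n$, and let $c_j = 2^j$ for $0\le j\le \lceil\log_2 t\rceil$. The item set of the new instance is $\{b_0,\dots,b_n\}\cup\{c_0,\dots,c_{\lceil\log_2 t\rceil}\}$, and the new target is $T := (t+1)t$. The gcd of the new item set is $1$ because $c_0=1$, so the Frobenius number is finite. The key two claims are: (i) $T$ is feasible in the augmented instance if and only if $t$ is feasible in the original instance --- for the forward direction, any representation of $T=(t+1)t$ must have the $c_j$-part equal to a multiple of $(t+1)$ and at most $\sum_j 2^j < 2(t+1)\le 2T/t$, hence the $c_j$-part is $0$ (since $t\ge 2$) or... again this needs the multiple to be forced, which it is once one notes the $c_j$-part is at most $2(t+1)-1 < 2(t+1)$ and a nonnegative multiple of $(t+1)$ that is $<2(t+1)$ is either $0$ or $(t+1)$; in the latter case the $b_i$-part equals $(t+1)(t-1)$, i.e.\ the original target would be $t-1$, so we actually learn feasibility of $t$ OR $t-1$, which is fine as long as we run the reduction for both $t$ and $t-1$, or better, pad $t$ so this boundary case cannot arise; and (ii) the Frobenius number of the augmented instance is $O(t\log t)$: since the powers of two $1,2,\dots,2^{\lceil\log_2 t\rceil}$ alone can represent every integer in $[0,2^{\lceil\log_2 t\rceil+1}-1]\supseteq[0,2t]$, and adding $b_0=(t+1)a_0$ lets us step by $(t+1)a_0 \le (t+1)t$, a standard argument gives every integer $\ge (t+1)t + 2t = O(t^2)$... which is not yet quasi-linear. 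The correct bound comes from a sharper gap argument: the powers of two give us all residues, and the largest gap one ever needs to bridge is the largest power of two, about $2t$, so in fact $F(\text{augmented}) \le 2^{\lceil\log_2 t\rceil+1} = O(t)$, actually $O(t)$ suffices here, and even the crude $O(t\log t)$ claimed in the statement is comfortably enough.

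With the augmented instance in hand, apply Ramirez-Alfonsin's construction~(\ref{eq:item_construction}) on top of it: form $\bar a_i = 2\cdot(\text{augmented item}_i)$, the item $2F(\text{augmented items})+1$, and the item $F(\text{all augmented items plus the previous one}) - 2T$. By Theorem~\ref{thm:reduction_frobenius}, deciding feasibility of $T$ (equivalently, up to the boundary adjustment, of $t$ in the original instance) reduces to computing two Frobenius numbers on these item sets. Crucially, every item in these Frobenius instances is bounded by a constant times $F(\text{augmented items})$, which by the previous paragraph is $O(t\log t)$, so the largest item $a_n'$ of each \frobenius~instance satisfies $a_n' = O(t\log t)$. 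The number of items is $n + O(\log t)$. Hence an algorithm solving \frobenius~in time $O((a_n')^{1-\epsilon}\cdot 2^{\delta' n'})$ would solve the original \subsetsum~instance in time $O((t\log t)^{1-\epsilon}\cdot 2^{\delta'(n+O(\log t))}) = O(t^{1-\epsilon'}\cdot 2^{\delta'' n})$ for suitable constants (absorbing the $2^{O(\delta'\log t)} = t^{O(\delta')}$ factor into the $t^{1-\epsilon'}$ by taking $\delta'$ small), contradicting Theorem~\ref{thm:hardness_subsetsum_an} under SETH. Since we assumed without loss $a_n\ge$ the parameter, and since in the constructed instance $a_n' = O(t\log t)$ while $t$ itself can be taken $\Theta(a_n')$ up to logarithmic factors, the hardness transfers to the parameter $a_n$, giving the claimed bound $O(a_n^{1-\epsilon}\cdot 2^{\delta n})$.

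The main obstacle I anticipate is the bookkeeping around the boundary case in claim~(i) --- ensuring that feasibility of $T$ in the augmented instance is \emph{exactly} equivalent to feasibility of $t$ in the original one rather than ``$t$ or $t-1$''. The clean fix is to first replace the original instance by one where $t$ is replaced by $2t$ and a new item of size $t$ is... no; the genuinely clean fix is to choose the scaling factor $N$ large enough (e.g.\ $N = \sum_j 2^j + 1$, one more than the total of all the powers of two) so that a nonnegative integer multiple of $N$ that is at most $\sum_j 2^j$ is forced to be $0$; then the $c_j$-part of any representation of $N\cdot t$ is exactly $0$, the equivalence in~(i) is exact, and the Frobenius bound in~(ii) becomes $O(t\log t)$ since $N = \Theta(t)$ and the largest power of two is still $\Theta(t)$. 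A secondary, purely technical point is verifying that the $\epsilon$-to-$\delta$ quantifier structure of Theorem~\ref{thm:hardness_subsetsum_an} survives the reduction, which it does because all size blow-ups are polynomial in $t$ and all item-count blow-ups are $O(\log t)$, both of which are harmless for a statement of the form ``$t^{1-\epsilon}\cdot 2^{\delta n}$''.
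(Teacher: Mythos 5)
Your high-level strategy is the same as the paper's: augment the \subsetsum~instance with $O(\log t)$ extra items so that its Frobenius number drops to $O(t\log t)$, then run the Ramirez-Alfonsin reduction (Theorem~\ref{thm:reduction_frobenius}) on the augmented instance and invoke Theorem~\ref{thm:hardness_subsetsum_an}. However, your concrete augmentation is broken, and the failure is not the ``boundary case'' you flag but something more basic: you are reasoning as if the problem were \emph{bounded} subset sum. In \subsetsum~the multiplicities are unbounded, so once the item $c_0=1$ is in the instance, \emph{every} nonnegative target is feasible (take $T$ copies of the item $1$). Your key step --- ``the $c_j$-part of any representation of $T=Nt$ is a nonnegative multiple of $N$ that is at most $\sum_j 2^j<N$, hence $0$'' --- is false, because the $c_j$-part can be $N\cdot m$ for any $m$ by reusing items; in particular $T$ is feasible independently of the original instance, so claim~(i) fails. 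Worse, with the item $1$ present the augmented instance has no infeasible targets at all, so the hypothesis $T\le F(\text{augmented})$ of Theorem~\ref{thm:reduction_frobenius} cannot be met and the Frobenius reduction is vacuous. Choosing $N$ larger does not help: the obstruction is reuse of small items, not the size of $N$.

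The paper's augmentation avoids exactly this trap. It adds items $a_{n+i}=Ra_0+(2^i\bmod a_0)$ for $0\le i\le\log a_0$, where $Ra_0$ is the smallest multiple of $a_0$ exceeding $t$. Each new item is strictly larger than $t$, so feasibility of the original target $t$ is genuinely unchanged; and the new items' residues modulo $a_0$ are the powers of two modulo $a_0$, so subset sums of them realize every residue class modulo $a_0$ using at most $\log a_0$ items of size $O(t)$ each, whence every $t'>3t\log t$ is reachable by such a subset plus multiples of $a_0$, giving $F=O(t\log t)$. If you want to repair your write-up, you must replace the powers of two by items that are simultaneously (a) large enough not to fit into any representation of $t$, and (b) rich enough in residues modulo $a_0$ to cap the Frobenius number --- which is precisely what the paper's $Ra_0+(2^i\bmod a_0)$ construction accomplishes. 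Your final quantifier bookkeeping (absorbing $2^{\delta\log t}$ into $t^{1-\epsilon'}$ by taking $\delta<\epsilon$) does match the paper and is fine.
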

\begin{proof}
    Assume that there exists an algorithm $\mathcal{A}$ for \frobenius~. Then we can use it to develop an algorithm $\mathcal{B}$ for \subsetsum~solving an instance $\mathcal{I}$ with given target value $t$ and item sizes $a_0, \ldots , a_n$. Without loss of generality, we thereby assume that all item sizes are less than $t$, i.e. $a_n \leq t$. Define a new instance $\mathcal{I}'$ of \subsetsum~with additional item sizes
    \begin{align*}
        a_{n+i} = R a_0+(2^i \bmod a_0) \qquad \text{ for }0 \leq i \leq \log a_0,
    \end{align*}
    where $R \in \ZZ$ is the smallest value with $R a_0 > t$.
    The instance $\mathcal{I}'$ has $\log a_0 \leq \log t$ additional items and each item is bounded in size by $R a_0 + \log a_0 \leq 2t + \log a_0 \leq 3t$.
    Since the newly defined item sizes exceed $t$, we know that $t$ is a feasible target value for instance $\mathcal{I}$ if and only if $t$ is a feasible target value for the instance $\mathcal{I}'$.
    
    \textbf{Claim:} The Frobenius number of instance $\mathcal{I}'$ is bounded by \begin{align*}
        F(a_0, \ldots , a_{n'}) < O(t \log t).
    \end{align*}
    Consider a target value $t' \in \ZZ$ with $t' > 3 t \log t$. We will see that $t'$ can be written by a subset of the items $a_{n+1}, \ldots , a_{n'}$ and multiples of the item $a_0$.
    Let $t'' \equiv t' \bmod a_0$ and let $I$ be the set of indices with a $1$ in the binary encoding of $t''$. Then 
    $t'' \equiv t'  \equiv \sum_{i \in I} 2^i \equiv \sum_{i \in I} a_{t+i} \bmod a_0$.
    Since each $a_{t+i} \leq 3t$ we know that 
    \begin{align*}
        \sum_{i \in I} a_{t+i} \leq 3t \log t \leq t'
    \end{align*}
    and hence 
    \begin{align*}
        t' = K a_0 + \sum_{i \in I} a_{t+i}
    \end{align*}
    for some $K \in \ZZgeq$ which proves the claim.
    
    Knowing that the Frobenius number of instances $\mathcal{I}'$ is bounded by $O(t \log t)$, we can use the reduction of \cite{ramirez1996complexity_frobenius} to show the SETH based hardness for \frobenius. Algorithm~\ref{alg_reduction_frobenius_an} solves \subsetsum~by using \frobenius~as a subroutine.
    \begin{algorithm} \caption{Solving \subsetsum~\cite{ramirez1996complexity_frobenius}} \label{alg_reduction_frobenius_an}
    \begin{enumerate}
        \item If $t > F(a_1, \ldots , a_n)$ return that $t$ is feasible.
        \item Compute item sizes $\bar{a}_0, \ldots , \bar{a}_{n'+2}$.
        \item Return that $t$ is feasible if 
        \begin{align*}
            F(\bar{a}_0, \ldots \bar{a}_{n'},\bar{a}_{n'+1}, \bar{a}_{n'+2} ) < F(\bar{a}_0, \ldots \bar{a}_{n'},\bar{a}_{n'+1}).
        \end{align*}
        Otherwise return that $t$ is not feasible.
    \end{enumerate}
    \end{algorithm}
    Since $F(a_0, \ldots, a_{n'})$ is bounded by $O(t \log t)$, by definition of the item sizes in~(\ref{eq:item_construction}) the item sizes $\bar{a}_0, \ldots , \bar{a}_{n'+2}$ are also bounded by $O(t \log t)$. Note that according to \cite{ramirez1996complexity_frobenius} we have $F(a_0,\ldots, a_{n'}) = 4 F(\bar{a}_0, \ldots , \bar{a}_{n'+1})$.

    By Theorem~\ref{thm:hardness_subsetsum_an}, we know that for an arbitrary $\epsilon>0$ there exists a $\delta>0$ such that there is no algorithm with a running time $O(t^{1- \epsilon} \cdot 2^{\delta n})$ for \subsetsum. Here, we can assume that $\delta< \epsilon$ as the non-existence of an algorithm with a running time of $O(t^{1- \epsilon} \cdot 2^{\delta n})$ for $\delta \geq \epsilon$ excludes the existence of an algorithm with improved running time of $O(t^{1- \epsilon} \cdot 2^{\epsilon n})$.
    
    Assume now that there exists an algorithm with a running time of $O(a_{n}^{1-2 \epsilon}\cdot 2^{\delta n})$ for \frobenius. Then by the above construction we know that there also exists an algorithm for \subsetsum~with a running time of 
    \begin{align*}
        O((t \log t)^{1-2 \epsilon} \cdot 2^{\delta (n + \log t)}) \\
        = O((t \log t)^{1-2 \epsilon} \cdot t^{\delta}\cdot 2^{\delta n}) \\
        = O((t \log t)^{1-2 \epsilon + \delta} \cdot 2^{\delta n}) \\
        \overset{\delta < \epsilon}{=} O(t^{1- \epsilon} \cdot 2^{\delta n}).
    \end{align*}
    However, this contradicts the assumption that no such algorithms exists for \subsetsum.
\end{proof}

\section{Conclusion}
We prove nearly matching upper and lower bounds for \subsetsum,\frobenius~and \allvalues~parameterized by $a_0$ and $a_n$ respectively. The only major gap left in the open is regarding the complexity of \allvalues~parameterized by $a_n$. To us it seems that an algorithm beating the running time of $O(a_{n}^{3/2})$ is possible but very sophisticated techniques are required, like an improvement to \minconv~for the case that the entries of the sequence are bounded or sophisticated structural properties to the solution space are needed. An algorithm with a running time of $O(M^{1-\epsilon} a_n)$ for \minconv~with entries bounded by $M$ for example would improve upon the complexity of \allvalues. A hardness result showing that \allvalues~can not be solved in linear time would also be very interesting as this would imply that the complexity of a parameterization of $a_0$ and $a_n$ of the considered problems would behave differently.

Furthermore, it would also be interesting to improve upon the logarithmic factors for the algorithms parameterized by $a_n$ or to show that this is not possible under certain hardness assumptions. Note that a running time of $O(a_n \log a_n)$ is already necessary to apply a single FFT to an interval $I$ of length $a_n$. 

\bibliographystyle{abbrv}
\bibliography{library}

\end{document}